\newtheorem{theorem}{Theorem}[section]
\newtheorem{proposition}{Proposition}
\theoremstyle{definition}
\newtheorem{definition}[theorem]{Definition}
\newtheorem{remark}{Remark}
\renewcommand{\div}{{\mbox{div}}}
\title[Integral Aspects of 3D Lotka--Volterra Equations with $(1:-3:1)$ Resonance]
{Integral Aspects of $(1:-3:1)$ Resonance Lotka--Volterra Equations with Nonstandard Analysis}
\author[Chiman Qadir, Waleed Aziz, Ibrahim Hamad]
{Chiman Qadir$^1$, Waleed Aziz$^1$, Ibrahim Hamad$^1$}
\address{$^1$ Department of Mathematics, College of Science, Salahaddin
University--Erbil,
Kurdistan region--Iraq}
\email{chiman.qadir@su.edu.krd, waleed.aziz@plymouth.ac.uk,ibrahim.hamad@su.edu.krd}
\subjclass[2010]{Primary 37K10    Secondary 34A05, 34E18}
\keywords{Darboux integrability, linearizability, perturbation technique  and nonstandard analysis.}
\begin{document}

\begin{abstract}	
In this paper the problems of integrable and linearizable Lotka-Volterra equations with $(\delta:-3 \delta:\delta)$-resonance are studied. The necessary conditions for both problems are obtained in the case when $\delta=1$ and it's sufficiency are proved. It is also shown that non-standard analysis has an important role in proving the sufficient integrable conditions in some cases. The non-standard analysis approaches have been used for both perturbed and unperturbed cases for integral aspects of the given system.
\end{abstract}

\maketitle
\section{Introduction}\label{sec1}
The first integral aspects of differential equations are one of interesting topics in the qualitative theory of ordinary differential equations. The knowledge of this kind of integral gives important information about the behavior of orbits near the origin. We mean, here, this kind of problem behaves as a
local integrability problem of differential systems. 

One useful method to study the integrability problem, which  particularly plays an important role
in the study of the dynamics of polynomial differential systems is Darboux theory of integrability.
This method is based on using invariant algebraic surfaces to construct first integrals.

For systems of dimension $n$, the existence of less than $n-1$ functionally independent first
integrals means that the system is partially integrable, and the existence of $n-1$
functionally independent first integrals means that the system is completely integrable.

In this paper the problems of complete integrability and linearizability at the origin for a three dimensional Lotka-Volterra system
\begin{eqnarray}\label{ch}
\begin{split}
\dot{x}&=P=x(\delta+ax+by+cz),\\ \dot{y}&=Q=y(-3 \delta+dx+ey+fz),\\
\dot{z}&=R=z(\delta+gx+hy+kz),
\end{split}
\end{eqnarray}
are investigated where variables. The parameters are considered as nonstandard 
hyperreals where $\delta$ is
limited and bounded away from zero.
The system (\ref{ch}) and its associated vector field $\chi$ 
	\begin{equation}\label{V}
	\chi=P\frac{\partial }{\partial x} +Q\frac{\partial }{\partial y}+R\frac{\partial }{\partial z},
	\end{equation}
are analytic around the origin.

Many models in the nature can be related to Lotka--Volterra equations introduced individually by Alfred Lotka (1925) and Vito Volterra (1926). These kind of equations are emphasized in studying of population dynamics and in applied sciences as well. More recently, Lotka--Volterra equations
are based in models of COVID-19 \cite{covid1, covid2}. 
Particularly, Lotka--Volterra system was studied by several researchers. In two dimensional
systems, it was considered by Christopher and Rousseau \cite{16}, Laurant and Llibre \cite{7},
Liu \cite{15} and many others to find conditions for the existence of a local first integrals in a
neighborhood of the origin. Many authors generalized a similar concept from two dimensional
systems into dimension three. An investigation of necessary and sufficient conditions for
integrability and linearizability for a three dimensional Lotka--Volterra systems was done by Laurant
\cite {6, 7} and for the resonance $(1:-1:1)$, $(2:-1:1)$ and $(1:-2:1)$ by Aziz and Christopher \cite{5}
and $(1:3:-1)$ in \cite{10}.
The authors in \cite{WAC, W,13}, give necessary and sufficient conditions for a family of three
dimensional quadratic nonlinearities.\\

Nonstandard analysis is considered in many different fields in mathematics and particularity
in differential equations. Moreover, nonstandard analysis tools
enable us to approach the non-exact solution more precisely than that with classical methods.
That is, with nonstandard there is no mathematical restrictions to find solutions near
singularity and regarding it as an alternative to the exact solution or as a shadow of the exact
solution. Some previous and recent works of this technique are done by Sari \cite{18, 1} who gave a
nonstandard model for perturbation of differential equations, Lorby and Sari \cite{19} about uniform
infinitesimal boundedness of singular perturbation of differential equation. Hamad and Pirdawood
\cite{21} investigated nonstandard solution in the monad of singularity of differential.
Hamad, in \cite{22, 23} gives nonstandard generalization of Sierpinski Theorem and introduced
generalized curvature and torsion in nonstandard analysis with some other concepts in differential geometry. 


The aim of this paper is to study the integrability problem of Lotka-Volterra system in two
different aspects. We are first interested in determining the necessary and sufficient conditions for both
integrability and
linearizability problems for Lotka--Volterra system with $(1:-3:1)$ resonant singular point at the origin.
The second is to extend the studied problem to hyperreal. We consider the system \eqref{ch} with
$(\delta:-3 \delta:\delta)$ resonant so that $\delta =1+\varepsilon$ where $\varepsilon$ is
infinitesimal or $\delta$ is itself infinitesimal, we find for both eigenvalues $(1:-3:1)$ and $(1+\varepsilon:-3(1+\varepsilon):1+\varepsilon)$
 in which the vector fields are infinitely close to each other and have two independent first integrals which are
infinitely close to standard two first integrals where the variables are near standard points in a
perturbation case. The perturbation of vector fields was proved as a deformation in nonstandard
\cite{1}. 
In \cite{8,9}, Gao showed that the homogeneous case $\delta=0$ will reduce to a two dimensional Lotka--Volterra equations. 
Other nonstandard transform use for unperturbation case of the system, which leads to the same integrability and linearizability conditions
as in standard form. On the other hand, making an infinitesimal perturbation with infinitesimal $\delta$
leads to new different results.
If we conditionally treated parameter $a$, then the obtained two independent first integrals of the
system \eqref{ch} for $\delta=1+\varepsilon$, where $\varepsilon$ is infinitesimal, belong to
monad(0), galaxy(0) or monad($\omega$), where $\omega$ is unlimited. However, if $\delta$ itself is infinitesimal, then the obtained
two independent first integrals of the system \eqref{ch} belong to monad(0) or galaxy(0).

\section{Basic Definitions and Known Results}
 In this section, some basic notions regarding integrability, linearizability problems and
 mathematical analysis are introduced. The first part is devoted to the emphasis of basic concepts
 and known results of integrability problem, and in the second part some background of mathematical
 analysis is described.\\
 
We recall that a non-constant analytic function $\phi(x,y,z)$ is a first integral of the differential system (\ref{ch}) if it is constant on all its solutions that satisfies 

	$$\chi(\phi)=P\frac{\partial \phi }{\partial x} +Q\frac{\partial \phi }{\partial y}+R\frac{\partial \phi }{\partial z}=0.$$
It is well known that the equation $\phi(x,y,z)=c$ for some constant $c\in \mathbb{R}$ provides
an implicit set of trajectories of the given system.

The system \eqref{ch} is said to be integrable at the origin if there exists a change of coordinates 
	$$X=x(1+O(x,y,z)),  \quad Y=y(1+O(x,y,z)), \quad  Z=z(1+O(x,y,z)),$$
bringing \eqref{ch} to a system orbitally equivalent to
\begin{eqnarray} \label{cch}
	\dot{X}=\delta X \kappa , \quad
	\dot{Y}=-3 \delta Y \kappa, \quad
	\dot{Z}=\delta Z \kappa
\end{eqnarray}
where $\kappa=\kappa(X,Y,Z)=1+O(X,Y,Z)$.\\
Thus two functionally independent first integrals of system (\ref{cch}) are 
      $$\Psi_{1}=X^{3}Y \quad \mbox{and} \quad   \Psi_{2}=Y Z^{3},$$
 and pull back to first integrals of system (\ref{ch}), we obtain the required functionally independent first integrals
 	$$\phi_{1}=x^{3}y(1+O(x,y,z))\quad \mbox{and}  \quad \phi_{2}=yz^{3}(1+O(x,y,z)).$$

Sometimes we can find an analytic change of variables that reduces the nonlinear system into
a linear system. Then the reduced system is said to be linearizable.
If the change of coordinates can be selected so that $\kappa=1$ in equation (\ref{cch}) then we say the system is linearizable.

Let $ l \in  \mathbb{C}[x,y,z]$ be a polynomial, then a surface $l=0$ is called an invariant algebraic surface of the system (\ref{ch}), if $l$ satisfies the linear partial differential equation
	$$\chi(l)=P\frac{\partial l }{\partial x} +Q\frac{\partial l }{\partial y}+R\frac{\partial l }{\partial z}=l C_{l},$$
for some polynomial $C_{l} \in  \mathbb{C}[x,y,z]$ are called the cofactor of the invariant algebraic surface $l=0$. It is well known that the degree of the cofactor is at most one.

A function of the form $E(x,y,z)=\exp(\frac{f}{g})$ where $f,g \in \mathbb{C}[x,y,z]$
is an exponential factor if it satisfies
$$\chi(E)=C_{E} E,$$
for some polynomial $C_{E}$ of degree at most one. The polynomial $C_{E}$ is called the cofactor of $E$.
 
If $l_i$ are invariant algebraic surfaces and $E=\exp(\frac{f}{g})$ is an exponential factor such that
there exist $\lambda_0, \lambda_i\in \mathbb{C}$ for $i=1,\ldots,n$, then the function of the form
$$\Phi=\prod l_{i}^{\lambda^{i}} E^{\lambda_{0} f/g },$$
is a Darboux function.
Given a Darboux function $\Phi$, we can compute 
	$$\chi(\Phi)= \Phi ( \sum \lambda_{i} C_{l_{i}}+\lambda_{0} C_{E}),$$ 
where $C_{l_i}$ and $C_E$ are respective cofactors of $l_i$ and $E$.
Obviously, the function $\Phi$ is a non-trivial first integral of the system if and only if the cofactors $C_{l_{i}}$ and $C_{E}$ are linearly dependent.

Note that the function $\mu$ is an inverse Jacobi multiplier for the vector field \eqref{V}
if it satisfies the equation
	$$\chi(\mu)= \mu \: \div(\chi).$$
In three dimensional system, if we construct one first integral $\Phi$ and an inverse Jacobi multiplier
$\mu$, then we can find a second independent first integral.

\begin{theorem}\cite{11}\label{w}
Suppose the analytic vector fields\\
		$$(\lambda+\sum\nolimits_{|I|>0} A_{xI}X^I) \frac{\partial }{\partial x}+y (\mu+\sum\nolimits_{|I|>0} A_{yI}X^I) \frac{\partial }{\partial y}+(v+\sum\nolimits_{|I|>0} A_{zI}X^I) \frac{\partial }{\partial z} ,$$	
has a first integral $\phi=x^{\alpha} y^{\beta}z^{\gamma}(1+O(x,y,z))$ with at least one of $\alpha,\beta,\gamma \neq 0$ and an inverse Jacobi multiplier $M=x^{r}y^{s}z^{t}(1+O(x,y,z))$ and suppose that the cross product of $(r-i-1,s-j-1,t-k-1)$ and $(\alpha,\beta,\gamma)$ is bounded away from zero for any integers $i,j,k \ge 0$,  then the system has a second analytic first integral of the form $\psi=x^{1-r} y^{1-s}z^{1-t}(1+O(x,y,z))$, and hence the system (\ref{ch}) is integrable.
\end{theorem}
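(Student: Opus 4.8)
The plan is to prove Theorem \ref{w} by reducing it to the standard machinery for constructing a second first integral from a first integral together with an inverse Jacobi multiplier, and then verifying that the monomial ansatz for $\psi$ works order by order. First I would set $\psi = \mu/\phi$ in the formal sense: since $\chi(\phi)=0$ and $\chi(\mu)=\mu\,\div(\chi)$, one checks directly that $\chi(\mu/\phi)=(\mu/\phi)\,\div(\chi)$, i.e. $\mu/\phi$ is again an inverse Jacobi multiplier, but more to the point one seeks a genuine first integral. The cleaner route is to look for $\psi$ in the form $\psi = x^{1-r}y^{1-s}z^{1-t}\,h(x,y,z)$ with $h = 1 + O(x,y,z)$ analytic, plug into $\chi(\psi)=0$, and derive a recursion for the Taylor coefficients of $h$.

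The key steps, in order, would be: (i) write $\chi = \chi_0 + \chi_1 + \cdots$ where $\chi_0$ is the linear (diagonal) part $\lambda x\,\partial_x + \mu y\,\partial_y + v z\,\partial_z$ with eigenvalues $(\lambda,\mu,v)$ — here in the normalized setting $(\lambda,\mu,v)=(\alpha,\beta,\gamma)$ up to the resonance relation satisfied by $\phi$; (ii) substitute the ansatz $\psi = x^{1-r}y^{1-s}z^{1-t} \sum_{I} c_I X^I$ (with $c_0=1$) into $\chi(\psi)=0$; (iii) collect the coefficient of each monomial $x^{1-r+i}y^{1-s+j}z^{1-t+k}$ and observe that the coefficient $c_{(i,j,k)}$ is multiplied by the quantity $(1-r+i)\lambda + (1-s+j)\mu + (1-t+k)v$, which after using the fact that $M=x^r y^s z^t(1+\cdots)$ is an inverse Jacobi multiplier (so that $r\lambda+s\mu+tv = \div(\chi)$ evaluated on the linear part $= \lambda+\mu+v$) equals, up to sign, the cross product of $(r-i-1,s-j-1,t-k-1)$ with $(\alpha,\beta,\gamma)$ paired appropriately; (iv) invoke the hypothesis that this quantity is bounded away from zero for all integers $i,j,k\ge 0$ to solve the recursion uniquely for every $c_{(i,j,k)}$, the right-hand side at each stage depending only on lower-order coefficients already determined and on the (known) Taylor data of $\phi$ and $M$.

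The remaining point is convergence: having produced a formal power series $\psi$, one must show it is analytic near the origin. Here the "bounded away from zero" hypothesis is exactly what is needed — it gives a uniform lower bound on the small divisors, so the standard majorant-series argument (comparing with a geometric-type majorant built from the analytic bounds on $\chi$, $\phi$ and $M$) shows the series for $h$ has positive radius of convergence. Finally I would note that $\psi = x^{1-r}y^{1-s}z^{1-t}(1+O(x,y,z))$ and $\phi = x^\alpha y^\beta z^\gamma(1+O(x,y,z))$ are functionally independent: their leading monomials are independent precisely because the cross product of the exponent vectors $(1-r,1-s,1-t)$ and $(\alpha,\beta,\gamma)$ is nonzero (the $i=j=k=0$ case of the hypothesis, shifted), so the system \eqref{ch} has two functionally independent analytic first integrals and is therefore completely integrable at the origin.

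The main obstacle I expect is not the formal recursion, which is routine, but pinning down the convergence estimate: one has to control the accumulation of the bounded-but-not-shrinking divisors against the growth of the forcing terms, and make the majorant bookkeeping precise enough that the radius of convergence stays positive. This is the step where the hypothesis "bounded away from zero for \emph{any} integers $i,j,k\ge 0$" (as opposed to merely nonzero) does essential work, and where I would be most careful.
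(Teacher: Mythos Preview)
The paper does not contain a proof of this theorem: it is quoted from \cite{11} (Aziz's thesis) and used as a black box in Cases 11, 22, 23, 29, 30, 38 and 39, so there is no proof here to compare your proposal against.

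That said, your outline has a genuine gap. In step (iii) you claim that the small divisor $(1-r+i)\lambda+(1-s+j)\mu+(1-t+k)v$ ``equals, up to sign, the cross product of $(r-i-1,s-j-1,t-k-1)$ with $(\alpha,\beta,\gamma)$''. This cannot be right: the small divisor is a scalar while the cross product is a vector, and after you use the lowest-order inverse Jacobi multiplier relation $(r-1)\lambda+(s-1)\mu+(t-1)v=0$ the divisor becomes simply $i\lambda+j\mu+kv$, which \emph{does} vanish on the resonance lattice (for example at $(i,j,k)=(3,1,0)$ when $(\lambda,\mu,v)=(1,-3,1)$). Thus the naked recursion for $\chi(\psi)=0$ breaks down at every resonant multi-index, and the cross-product hypothesis is not the small-divisor bound you need for that recursion.

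What is missing is the mechanism by which the \emph{pair} $(\phi,M)$ kills the obstructions at the resonant steps. One route is the classical Jacobi reduction: the $2$-form $\iota_\chi(dx\wedge dy\wedge dz)/M$ is closed and annihilated by $d\phi$, so on each leaf $\phi=c$ the restricted field acquires an inverse integrating factor and one integrates to get $\psi$; the cross-product condition on $(r-i-1,s-j-1,t-k-1)$ and $(\alpha,\beta,\gamma)$ then controls the leading exponents so that $\psi$ extends analytically across the origin with the stated monomial prefactor and is independent of $\phi$. Another route, staying with power series, is to allow modification of $\psi$ at each resonant order by a suitable function of $\phi$ (since $\chi(\phi)=0$, adding $F(\phi)$ does not affect $\chi(\psi)$); the cross-product condition is exactly what guarantees that the leading monomial of $\psi$ never falls on the line spanned by $(\alpha,\beta,\gamma)$, so these corrections do not spoil the form $x^{1-r}y^{1-s}z^{1-t}(1+\cdots)$. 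Either way, you must explain how $\phi$ enters the construction beyond its mere existence, and reinterpret the cross-product hypothesis correctly; the ``bounded away from zero'' clause then feeds the majorant argument for convergence as you say.
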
 

\begin{theorem}\cite{5}\label{w1}
If the system (\ref{ch}) with $\delta=1$ is integrable and there exists a function $ \zeta=x^{\alpha}y^{\beta}z^{\gamma} (1+O(x,y,z))$ such that $\chi(\zeta)=k \zeta$ for some constant $k=\alpha \lambda+\beta \mu + \gamma \nu$, then the system is linearizable.
\end{theorem}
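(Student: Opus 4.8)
The plan is to convert the integrability hypothesis into two first integrals of prescribed form and then, using the additional function $\zeta$, to write down an explicit linearizing change of variables. Since \eqref{ch} with $\delta=1$ is integrable, by the definition of integrability it admits two functionally independent first integrals $\phi_1=x^{3}y(1+O(x,y,z))$ and $\phi_2=yz^{3}(1+O(x,y,z))$, so that $\chi(\phi_1)=\chi(\phi_2)=0$. Here the linear eigenvalues are $\lambda=1$, $\mu=-3$, $\nu=1$, hence the constant of the statement is $k=\alpha-3\beta+\gamma$. I will assume $k\neq0$; the case $k=0$, in which $\zeta$ is itself a first integral (and so functionally dependent on $\phi_1,\phi_2$), is degenerate and is understood to be excluded.

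Because $\chi$ is a derivation, from $\chi(\zeta)=k\zeta$ and $\chi(\phi_i)=0$ one gets
\[
\chi\!\left(\zeta^{p}\phi_1^{q}\phi_2^{r}\right)=p\,k\,\zeta^{p}\phi_1^{q}\phi_2^{r}
\]
for all exponents $p,q,r$. Guided by this, I would put
\[
X=\zeta^{1/k}\phi_1^{q}\phi_2^{r},\qquad
Y=\zeta^{-3/k}\phi_1^{\alpha/k}\phi_2^{\gamma/k},\qquad
Z=\zeta^{1/k}\phi_1^{q'}\phi_2^{r'},
\]
and then fix $q,r,q',r'$ by demanding that the leading monomials of $X,Y,Z$ be exactly $x,y,z$. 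Comparing the exponents of $x$, $y$ and $z$ in each leading term yields a small linear system; two of its equations give $q=(k-\alpha)/(3k)$, $r=-\gamma/(3k)$, $q'=-\alpha/(3k)$, $r'=(k-\gamma)/(3k)$, while the remaining equation --- the one recording the exponent of the ``middle'' variable --- collapses in every case to the single identity $k=\alpha-3\beta+\gamma$, which is precisely the hypothesis. With these choices one has, as a consistency check, $Y=\phi_1X^{-3}=\phi_2Z^{-3}$.

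The step I expect to require the most care is analyticity. A priori $\zeta^{1/k}$ and the fractional powers $\phi_i^{q}$ only make sense as a product of an analytic factor $1+O(x,y,z)$ (namely the unit factors of $\zeta,\phi_1,\phi_2$ raised to fractional powers, which is legitimate near the origin) with fractional powers of $x,y,z$; the latter are not analytic. The exponent count of the previous paragraph is exactly the assertion that all those fractional powers of $x,y,z$ cancel, so that $X=x(1+O(x,y,z))$, $Y=y(1+O(x,y,z))$ and $Z=z(1+O(x,y,z))$ are honest analytic functions in a neighbourhood of the origin. Granting this, $\chi(X)=X$, $\chi(Y)=-3Y$, $\chi(Z)=Z$ follow immediately from the displayed identity with $p=1/k$, $-3/k$, $1/k$ respectively; the map $(x,y,z)\mapsto(X,Y,Z)$ has the identity as its Jacobian at the origin, hence is a local analytic diffeomorphism; and in the coordinates $(X,Y,Z)$ the vector field $\chi$ is $X\partial_X-3Y\partial_Y+Z\partial_Z$, i.e.\ the system \eqref{cch} with $\kappa\equiv1$. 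Thus \eqref{ch} is linearizable. The only genuine obstruction is the branch/analyticity issue just described, and it is dissolved exactly by the eigenvalue relation $k=\alpha\lambda+\beta\mu+\gamma\nu$.
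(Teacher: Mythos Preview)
The paper does not prove this theorem; it is quoted from \cite{5} as a known tool and used only by reference (e.g.\ in Cases~9, 22, 30 of Theorem~\ref{int}). So there is no in-paper proof to compare against.

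That said, your argument is the standard one and is correct. The key observation is exactly what you isolate: for any exponents $p,q,r$ one has $\chi(\zeta^{p}\phi_1^{q}\phi_2^{r})=pk\,\zeta^{p}\phi_1^{q}\phi_2^{r}$, so choosing $p=1/k,\,-3/k,\,1/k$ produces functions with the right eigenvalues, and it only remains to tune $q,r$ so that the leading monomial is $x$, $y$, $z$ respectively. Your computation of $q,r,q',r'$ is right, and the ``third'' equation in each case does reduce to $k=\alpha-3\beta+\gamma$. The analyticity step is also handled correctly: writing $\zeta=x^{\alpha}y^{\beta}z^{\gamma}u$ and $\phi_i=x^{a_i}y^{b_i}z^{c_i}u_i$ with units $u,u_i$, the fractional powers of the units are analytic near the origin, while the (a priori fractional) powers of $x,y,z$ cancel precisely because of the exponent relations you solved. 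Your exclusion of $k=0$ is appropriate: in that case $\zeta$ is a first integral, hence functionally dependent on $\phi_1,\phi_2$ by Theorem~\ref{ThZhang}, and the hypothesis adds nothing.
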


\begin{theorem}\cite{2}\label{colin}
	For $\lambda \in \mathbb{N} -\{1\} $ the Lotka--Volterra system
	\begin{eqnarray}\label{1}
	\dot{x}= x(1+c_{20}x + c_{11}y), \quad  \dot{y}= y(-\lambda+d_{11}x+d_{02}y),
	\end{eqnarray}
	has a linearizable saddle at the origin if and only if one of the following
	conditions is satisfied\\
	\begin{enumerate}[i.]
	\item $ m c_{20}+d_{11}=0 , m = 0,\ldots,\lambda -2$.
	\item $ c_{11}=d_{02} = 0$.
	\item $ c_{02} - d_{11} = c_{11} - d_{02} = 0$.
	\item $c_{11} = (\lambda - 1)c_{20} + d_{11} = 0.$
	\end{enumerate}
\end{theorem}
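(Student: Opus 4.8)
The plan is to split the biconditional into a necessity part (obstructions to formal linearization) and a sufficiency part (an explicit linearizing change of coordinates on each of the four families), phrasing the sufficiency in Darboux--theoretic terms so that analyticity of the constructed change is automatic rather than something one must estimate.

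\emph{Necessity.} Introduce the formal transformation $X=x+\sum_{i+j\ge 2}a_{ij}x^{i}y^{j}$, $Y=y+\sum_{i+j\ge 2}b_{ij}x^{i}y^{j}$ and impose $\dot X=X$, $\dot Y=-\lambda Y$. Since $\lambda\in\mathbb{N}\setminus\{1\}$, the coefficients $a_{ij}$ are determined uniquely except at the resonant monomials $i-\lambda j=1$, and the $b_{ij}$ except at $i-\lambda j=-\lambda$, i.e.\ $i=\lambda(j-1)$; at those monomials one instead gets a polynomial relation $\iota_{k}(c_{20},c_{11},d_{11},d_{02})=0$, the $k$-th linearizability quantity, and \eqref{1} is linearizable iff every $\iota_{k}$ vanishes. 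By the Hilbert basis theorem the ideal generated by the $\iota_{k}$ is finitely generated, so it suffices to compute $\iota_{1},\dots,\iota_{N}$ for a suitable $N$ (the recursion above, run on a computer algebra system) and then to decompose the affine variety $V(\iota_{1},\dots,\iota_{N})$ into irreducible components; the assertion is that this decomposition is exactly the families (i)--(iv). I expect the first informative quantities to appear in degree about $\lambda+1$ (the monomials $x^{\lambda}y^{2}$ and $x^{\lambda+1}y$), and to factor in the parameters so as to exhibit the $\lambda-1$ linear factors $d_{11}+mc_{20}$, $0\le m\le\lambda-2$, of family (i).

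\emph{Sufficiency.} On each component I would exhibit a linearization assembled from invariant algebraic curves and exponential factors of \eqref{1}; being explicitly rational or Darbouxian, these are analytic at the origin, so no convergence argument is needed. Family (iii) is the cleanest: when $d_{11}=c_{20}$ and $d_{02}=c_{11}$ the line $L:=1+c_{20}x-\tfrac{c_{11}}{\lambda}y$ is invariant with cofactor $c_{20}x+c_{11}y$, and then $X=x/L$, $Y=y/L$ have cofactors exactly $1$ and $-\lambda$. For family (ii) the variable $x$ decouples, so $X=x/(1+c_{20}x)$ linearizes it, while $1+c_{20}x$ is invariant with cofactor $c_{20}x$, whence $Y=y(1+c_{20}x)^{-d_{11}/c_{20}}$ (or $Y=y\exp(-d_{11}x)$, an exponential factor, when $c_{20}=0$) has cofactor $-\lambda$. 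Family (iv) and the case $m=0$ of family (i) go one step further: after linearizing the one decoupled variable the other equation takes the ``node'' form $\dot{\tilde Y}=\tilde Y(-\lambda+g(X)\tilde Y)$ with $g$ analytic (in family (iv), $g(X)=d_{02}(1-c_{20}X)^{\lambda-1}$), and one seeks $Y=\tilde Y/(1+p(X)\tilde Y)$, which reduces to a cohomological equation $Xp'-\lambda p=g$ (respectively $\lambda Yq'-q=c_{20}h$ when $m=0$); this is solvable in polynomials or convergent power series because in family (iv) $\deg g=\lambda-1<\lambda$, so the only possibly obstructed coefficient $g_{\lambda}$ vanishes, and when $m=0$ the denominators $\lambda k-1$ are nonzero for all $k\ge 0$.

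\emph{Main obstacle.} The delicate point is to close family (i) uniformly in $m$ for $1\le m\le\lambda-2$, where neither variable decouples. Here I would search for an invariant algebraic curve $\ell=1+(\text{lower order})$ of degree at most $\lambda-1$ with an affine cofactor, whose existence should be forced precisely by $d_{11}+mc_{20}=0$ (the bound $m\le\lambda-2$ being what keeps $\deg\ell<\lambda$), and then rerun the node reduction, the critical coefficient again vanishing by the degree bound. On the necessity side the companion difficulty is certifying that the finitely many computed $\iota_{k}$ already cut out the whole linearizability variety: one rules out spurious components by verifying that on each component returned by the decomposition the matching sufficiency construction above applies, so that the necessary and sufficient lists coincide and the biconditional closes.
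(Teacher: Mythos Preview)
The paper does not contain its own proof of this theorem: it is stated in the section ``Basic Definitions and Known Results'' with an explicit citation to \cite{2} (Christopher, Marde\v{s}i\'c and Rousseau) and is used later only as a black box (Cases 18--21, 26, 32). There is therefore nothing in the present paper to compare your proposal against.

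That said, your outline is broadly the right shape and matches the strategy of the original reference: compute the resonant obstructions, decompose the resulting variety, and verify sufficiency on each component by Darboux constructions. Your treatment of families (ii), (iii), (iv) and the case $m=0$ of (i) is correct as written. The genuine work, as you correctly flag, is family (i) for $1\le m\le\lambda-2$: there the existence of the desired invariant curve and the vanishing of the final resonant coefficient need an actual argument uniform in $m$ and $\lambda$, not just an expectation; this is precisely what the cited paper supplies. On the necessity side, note that your appeal to the Hilbert basis theorem, as stated, only guarantees stabilisation for each fixed $\lambda$; a proof covering all $\lambda\in\mathbb{N}\setminus\{1\}$ simultaneously cannot be closed by a finite computer computation of $\iota_1,\dots,\iota_N$ and needs the structural arguments of \cite{2}.
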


\section{Poincar\'e-Dulac Normal Form Theory}
The theory of normal form of a nonlinear differential systems  first introduced in the work of
Poincar\'e. This theory is crucial in studying the integral aspects of differential systems and gives a clue of general prospective of the form of their integrals. So, here, we give main facts
of normal forms and some notations as well. 

Let us now consider an analytic system 
\begin{equation} \label{zhang1}
\dot x=Ax+F(x),
\end{equation}
where $A$ is an $n\times n$ matrix with eigenvalues $\alpha_{1}, \ldots, \alpha_{n}$,
$x \in \mathbb{R}^n$ and $F(x)=O(2)$ is an analytic vector-valued function without constant or
linear term in $(\mathbb{R}^n,0)$.

\begin{definition}\cite{poincare}
If the convex hull of the spectrum of matrix $\sigma(A)$ does not contain the origin, then  $\sigma(A)$
is said to be in the Poincar\'e domain. If the origin lies inside the convex hull of  $\sigma(A)$,
then we say  $\sigma(A)$ is in the Siegel domain.
\end{definition}

\begin{remark}
A singular point whose spectrum lies in the Poincar\'e domain can be brought to normal form by an analytic change of coordinates. Particularly, a node with two analytic separatrices can have no resonant terms in its normal form and so must be analytically linearizable \cite{5}.
\end{remark}

\begin{theorem}\cite{poincare}
If the spectrum of matrix is in the Poincar\'e domain, then there are at most finitely many
resonant monomials.
\end{theorem}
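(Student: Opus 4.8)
The plan is to turn the geometric hypothesis --- that $0$ does not lie in the convex hull of the spectrum $\sigma(A)=\{\alpha_1,\ldots,\alpha_n\}$ --- into an \emph{a priori} bound on the total degree of any resonant monomial. Recall that a monomial $x^m e_j=x_1^{m_1}\cdots x_n^{m_n}e_j$ with $m\in\mathbb{Z}_{\ge 0}^n$ and $|m|=m_1+\cdots+m_n\ge 2$ is resonant precisely when the resonance relation $\alpha_j=\langle m,\alpha\rangle:=\sum_i m_i\alpha_i$ holds. Since the component index $j$ ranges over the finite set $\{1,\ldots,n\}$, it is enough to show that the set of multi-indices $m$ admitting such a relation is finite, and for that it suffices to bound $|m|$ independently of $m$ and $j$.

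The key step is a separating-functional argument. Because $\mathrm{conv}(\alpha_1,\ldots,\alpha_n)$ is a compact convex subset of $\mathbb{C}\cong\mathbb{R}^2$ that does not contain the origin, the separating hyperplane theorem produces a nonzero $w\in\mathbb{C}$ and a constant $c>0$ with $\mathrm{Re}(\overline{w}\,\alpha_i)\ge c$ for every $i$; put $M:=\max_{1\le j\le n}\mathrm{Re}(\overline{w}\,\alpha_j)<\infty$. Applying the real-linear functional $\phi(z)=\mathrm{Re}(\overline{w}\,z)$ to a resonance $\alpha_j=\langle m,\alpha\rangle$ yields
\[
M\;\ge\;\phi(\alpha_j)\;=\;\sum_i m_i\,\phi(\alpha_i)\;\ge\;c\sum_i m_i\;=\;c\,|m|,
\]
so $|m|\le M/c$. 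Hence every resonant monomial has total degree at most $M/c$, and since there are only finitely many multi-indices of bounded degree in $n$ variables and finitely many components, there are at most finitely many resonant monomials.

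I do not anticipate a genuine obstacle here: the argument is essentially routine once the separation is in place. The only point deserving care is the passage to the real picture --- the eigenvalues are complex, so the separating functional must be taken real-linear on $\mathbb{C}\cong\mathbb{R}^2$ rather than $\mathbb{C}$-linear, and one must invoke the strict form of the separating hyperplane theorem (a point disjoint from a compact convex set can be strictly separated) to guarantee $c>0$ rather than merely $c\ge 0$. With that in hand the degree bound, and therefore finiteness, follows immediately.
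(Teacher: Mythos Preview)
Your argument is correct and is in fact the standard proof of this classical fact: the separating hyperplane theorem applied to the compact convex hull of the spectrum in $\mathbb{C}\cong\mathbb{R}^2$ yields a real-linear functional bounded below by some $c>0$ on all eigenvalues, and applying it to the resonance relation $\alpha_j=\sum_i m_i\alpha_i$ immediately bounds $|m|$ by $M/c$. The caveats you flag (real-linearity rather than $\mathbb{C}$-linearity, and the strict form of separation to get $c>0$) are exactly the right points to watch.

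There is nothing to compare against, however: the paper does not supply its own proof of this theorem. It is quoted as a known result with a citation to Chow--Li--Wang, \emph{Normal forms and bifurcation of planar vector fields}, and no argument is given in the text. Your proof is precisely the one found in standard references on normal form theory (including the cited one and, e.g., Ilyashenko--Yakovenko), so you have reproduced the expected argument.
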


\begin{definition}
An n-tuple $\alpha=(\alpha_{1}, \ldots, \alpha_{n})$ of eigenvalues of $A$ is called {\it resonant}, if it satisfies the resonant identity
\begin{equation} \label{res}
\alpha_{i}=\langle \ell,\alpha \rangle=\sum^{n}_{i=1} \ell_{i} \alpha_{i}, \quad |\ell| \geq 2,
\end{equation}
for non-negative integers $\ell_{i} \in \mathbb{N}_{0} = \{0, 1, 2, \ldots \},$ and
the natural numbers $|\ell|= \sum^{n}_{i=1} \ell_{i}$
is the order of the resonance.
The coefficient $X^{\ell}_{i}$ of the monomial $x$ is called a resonant coefficient and the corresponding
term is called a resonant term.
\end{definition}

Note that if condition (\ref{res}) does not hold, then $\alpha_{1}, \ldots, \alpha_{n}$ are called non-resonant and then the differential equation can be formally linearized
(i.e. a formal, but in general, non-convergent transformation series exists).
For more details consult \cite{Ilyashenko2008,Romanovski2009}.\\

Let $\mathcal{H}_s$ denote the vector space of functions from $\mathbb{R}^n$ to $\mathbb{R}^n$
each of whose components is a homogeneous polynomial of degree $s$.

The linear operator (homological operator) $\mathcal{L}$ on $\mathcal{H}_s$ is defined by
\[\mathcal{L}h(y)=d\,h(y)\,Ay-Ah(y).\]

By performing a sequence of transformations of the form
\[x=H(y)=y+h^s(y),\qquad h^s\in\mathcal{H}_s,\]
we can remove all non-resonant terms of 
system \eqref{zhang1} and is formally equivalent to 
\begin{equation} \label{zhang2}
\dot y=Ay+G(y).
\end{equation}

The problem of when an analytic system is analytically equivalent to
its normal form is classical and is an open problem.

\begin{theorem}\cite{Romanovski2009}
Let $(\alpha_{1}, \ldots, \alpha_{n})$ be the eigenvalues of $A$ in \eqref{zhang1}. If the equality
\eqref{res} does not holds for all $i\in{1,\ldots,n}$ and for all $\alpha \in \mathbb{N}_{0}$ in which 
$|\alpha|\geq0$, then \eqref{zhang1} and \eqref{res} are formally equivalent for all $F$ and $G$.

\end{theorem}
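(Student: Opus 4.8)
The plan is to follow the classical Poincar\'e argument: reduce the assertion of formal linearizability to the invertibility of the homological operator $\mathcal L$ on each homogeneous component $\mathcal H_s$, and then remove the nonlinear terms of \eqref{zhang1} degree by degree. First I would apply a linear change of coordinates bringing $A$ to Jordan form; this affects neither the eigenvalues $(\alpha_1,\dots,\alpha_n)$ nor the non-resonance hypothesis \eqref{res}. Writing $F(x)=\sum_{s\ge 2}F_s(x)$ with $F_s\in\mathcal H_s$, the goal is to produce a formal near-identity series $x=y+\sum_{s\ge2}h^s(y)$, $h^s\in\mathcal H_s$, conjugating \eqref{zhang1} to the purely linear system $\dot y=Ay$, which is \eqref{zhang2} with $G\equiv 0$; the general case then follows by transitivity of formal equivalence.

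The one substantive computation is the spectrum of $\mathcal L$ on $\mathcal H_s$. Evaluating $\mathcal L$ on a monomial vector field $y^{\ell}e_j$, where $y^{\ell}=y_1^{\ell_1}\cdots y_n^{\ell_n}$ with $|\ell|=s$ and $e_j$ the $j$-th coordinate direction, gives, when $A$ is diagonal,
\[
\mathcal L\bigl(y^{\ell}e_j\bigr)=\bigl(\langle\ell,\alpha\rangle-\alpha_j\bigr)\,y^{\ell}e_j ,
\]
and in the Jordan case $\mathcal L$ remains triangular in a suitable ordering of the monomial basis with the same numbers $\langle\ell,\alpha\rangle-\alpha_j$ on the diagonal. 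By the non-resonance hypothesis each of these numbers is nonzero, so $\mathcal L:\mathcal H_s\to\mathcal H_s$ is a linear isomorphism for every $s\ge2$, and $\mathcal L^{-1}$ is well defined on each $\mathcal H_s$.

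The construction then proceeds by induction on $s$. Assuming that after finitely many near-identity transformations the system reads $\dot y=Ay+R_s(y)+(\text{terms of degree}>s)$ with $R_s\in\mathcal H_s$ (for $s=2$ take $R_2=F_2$), I would substitute $y=z+h^s(z)$ and expand $(I+dh^s)^{-1}=I-dh^s+\cdots$; comparing degree-$s$ terms shows the new degree-$s$ component is $R_s-\mathcal Lh^s$, while the linear part $Az$ is unchanged and only terms of degree $>s$ are altered. Choosing $h^s:=\mathcal L^{-1}R_s$ removes the degree-$s$ term. Iterating over all $s\ge2$, the (formal) composition of these transformations is well defined because each coefficient of the composite series stabilizes after finitely many steps, and it carries \eqref{zhang1} to $\dot y=Ay$.

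The main obstacle is nothing more than this eigenvalue identity for $\mathcal L$ together with the observation that non-resonance makes $\mathcal L$ invertible; the remainder is bookkeeping with formal power series. It should be stressed that the argument yields only a \emph{formal} conjugacy — convergence fails in general and is precisely the small-divisor issue (Siegel, Bruno) behind the open problem mentioned just before the statement.
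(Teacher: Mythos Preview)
Your argument is the classical Poincar\'e linearization procedure and is correct: reduce $A$ to Jordan form, compute the spectrum of the homological operator $\mathcal L$ on each $\mathcal H_s$, use non-resonance to invert $\mathcal L$, and kill the nonlinear terms degree by degree with $h^s=\mathcal L^{-1}R_s$. You also correctly flag that the conclusion is only formal.

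There is nothing to compare against, however: the paper does not prove this theorem at all. It is stated as a background result with a citation to \cite{Romanovski2009}, and the paper's own contributions begin only with Theorem~\ref{int}. Your write-up is essentially the proof one finds in the cited source (and in any standard normal-form text), so it is entirely appropriate here. One small remark: the statement as printed in the paper has an obvious misprint --- ``\eqref{zhang1} and \eqref{res} are formally equivalent'' should read ``\eqref{zhang1} and \eqref{zhang2}''. You interpreted this correctly.
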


Let
\[\mathcal{M}_\alpha=\Big\{ \ell=(\ell_1,\ldots,\ell_n): \langle \ell,\alpha \rangle=\sum^{n}_{i=1} \ell_i \alpha_i =0, \quad |\ell| \geq 1 \Big\},\]
 be a set of vectors and let $R_\alpha$ be the rank of vectors in the set $\mathcal{M}_\alpha$.
It is obvious  $R_\alpha\leq n-1$.

\begin{theorem}\cite{X.Zhang}\label{ThZhang}
Suppose that the origin of system \eqref{zhang1} is non-degenerate. Then system \eqref{zhang1} has
$n-1$ linearly independent analytic first integrals if and only if $R_\lambda=n-1$,
and it is analytically equivalent to its normal form
\[\dot y_i=\lambda_i y_i(1+G(y)), \quad i=1, \ldots,n,\]
by an analytic normalization, where $G(x)$ has no constant term and is an analytic
function of $y^\ell$ with $\ell\in \mathcal{M}$ and $\ell=(\ell_1,\ldots,\ell_n)=1$.
\end{theorem}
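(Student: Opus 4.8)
The plan is to read the assertion as an equivalence and to dispatch the ``$\Leftarrow$'' direction directly, then to prove ``$\Rightarrow$'' in three steps: the rank statement, the distinguished shape of the formal normal form, and the convergence of the normalization. We read the theorem as: for a non-degenerate origin, system \eqref{zhang1} possesses $n-1$ functionally independent analytic first integrals if and only if $R_\lambda=n-1$ and \eqref{zhang1} is analytically conjugate to the distinguished normal form $\dot y_i=\lambda_i y_i\bigl(1+G(y)\bigr)$, $i=1,\dots,n$, where $G$ has no constant term and is an analytic function of the resonant monomials $\{\,y^{\ell}:\ell\in\mathcal{M}_{\lambda}\,\}$. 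We may take $A=\mathrm{diag}(\lambda_1,\dots,\lambda_n)$ with every $\lambda_i\neq 0$, the situation relevant to \eqref{ch}. For ``$\Leftarrow$'', let $H$ conjugate \eqref{zhang1} to the distinguished form and choose exponent vectors $m^{(1)},\dots,m^{(n-1)}$ with non-negative integer entries spanning $\mathcal{M}_{\lambda}\otimes\mathbb{Q}$, which is possible exactly because $R_\lambda=n-1$. A one-line computation gives $\chi\bigl(y^{m^{(k)}}\bigr)=\langle m^{(k)},\lambda\rangle\,\bigl(1+G(y)\bigr)\,y^{m^{(k)}}=0$, and since the matrix $\bigl(m^{(k)}_j\bigr)$ has rank $n-1$ the monomials $y^{m^{(1)}},\dots,y^{m^{(n-1)}}$ are functionally independent first integrals of the normal form; their pull-backs by $H$ give $n-1$ functionally independent analytic first integrals of \eqref{zhang1}.

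\emph{The forward direction, step 1: the rank.} Let $\Phi_1,\dots,\Phi_{n-1}$ be functionally independent analytic first integrals. The lowest-degree part of $\chi(\Phi_j)=0$ reads $\sum_i\lambda_i y_i\,\partial_i\Phi_j^{(m_j)}=0$, so the lowest homogeneous part $\Phi_j^{(m_j)}$ is a combination of resonant monomials $y^{\ell}$, $\ell\in\mathcal{M}_{\lambda}$. Pass to the formal Poincar\'e--Dulac normal form $\dot y_i=\lambda_i y_i+G_i(y)$, each $G_i$ being a combination of monomials $y^{\ell}$ with $\langle\ell,\lambda\rangle=\lambda_i$. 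Both $\chi_0=\sum_i\lambda_i y_i\,\partial_i$ and the full normal-form field preserve the grading of a formal series by the value of $\langle\cdot,\lambda\rangle$, because $G_i\,\partial_i$ shifts a monomial's exponent by $\ell'-e_i$ with $\langle\ell',\lambda\rangle=\lambda_i$; hence the resonant part and the non-resonant part of a formal first integral are each first integrals. But a non-resonant formal first integral must vanish, since its lowest-degree part would be simultaneously non-resonant and annihilated by $\chi_0$. Therefore, after normalizing, $\Phi_1,\dots,\Phi_{n-1}$ are functionally independent formal series in the resonant monomials, and functional independence of $n-1$ such series forces the exponents occurring in them to span a space of dimension at least $n-1$; combined with the always-valid $R_\lambda\leq n-1$ this gives $R_\lambda=n-1$.

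\emph{Step 2: the formal normal form is distinguished.} Write the formal normal form as $\dot y_i=g_i(y)\,y_i$ with $g_i=\lambda_i+G_i/y_i$, a formal (Laurent) series in the resonant monomials. Fix a basis $m^{(1)},\dots,m^{(n-1)}$ of $\mathcal{M}_{\lambda}\otimes\mathbb{Q}$ and put $s_k=y^{m^{(k)}}$. By step 1 each $\Phi_j$ is a formal function of $s_1,\dots,s_{n-1}$, and the Jacobian $\bigl(\partial\Phi_j/\partial s_k\bigr)$ is generically invertible; solving $\chi(\Phi_j)=\sum_k(\partial\Phi_j/\partial s_k)\,\chi(s_k)=0$ yields $\chi(s_k)=0$ for every $k$. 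Since $\chi(s_k)=\langle g(y),m^{(k)}\rangle\,s_k$, we obtain $\langle g(y),m^{(k)}\rangle=0$ for all $k$, so the vector $g(y)$ lies on the line $\mathcal{M}_{\lambda}^{\perp}=\mathbb{R}\lambda$ (using $R_\lambda=n-1$) for every $y$. Hence $g_i(y)=\lambda_i\bigl(1+G(y)\bigr)$ with a common resonant series $G$ having no constant term: the formal normal form is distinguished.

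\emph{Step 3: convergence --- the main obstacle.} It remains to show the formal normalizing transformation converges. This is where the hypothesis is used essentially: $R_\lambda=n-1$ forces $\lambda$ to be orthogonal to a rational subspace of codimension one, hence proportional to an integer vector, so every divisor $\langle\ell,\lambda\rangle-\lambda_i$ occurring in the homological equations is either zero (a resonance) or bounded away from zero uniformly --- there is no small-divisor problem. Together with the distinguished shape obtained in step 2 (which is precisely Bruno's Condition~A), a convergence theorem of Bruno type (Condition~A plus the trivially satisfied small-divisor condition) produces an analytic normalizing transformation; alternatively one may invoke Zung's theorem on the analytic normalization of non-degenerate analytically integrable vector fields. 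Thus \eqref{zhang1} is analytically conjugate to $\dot y_i=\lambda_i y_i\bigl(1+G(y)\bigr)$, which closes the equivalence. I expect this final step to be the genuinely hard part: steps $1$ and $2$ are essentially the linear algebra of the resonance lattice $\mathcal{M}_{\lambda}$ overlaid on the formal Poincar\'e--Dulac calculus, whereas the convergence requires analytic machinery and rests crucially on the commensurability of the eigenvalues that $R_\lambda=n-1$ supplies.
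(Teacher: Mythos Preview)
The paper does not contain its own proof of this theorem: the statement is quoted from \cite{X.Zhang} (note the citation immediately after \texttt{\textbackslash begin\{theorem\}}) and is used as a background result, so there is nothing in the paper to compare your argument against.

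That said, your outline is broadly in the spirit of Zhang's original argument: the ``$\Leftarrow$'' direction is indeed immediate, and for ``$\Rightarrow$'' the key ingredients are exactly (i) that formal first integrals of a Poincar\'e--Dulac normal form are series in the resonant monomials, (ii) that $n-1$ independent such integrals force the normal form into the distinguished shape $\dot y_i=\lambda_i y_i(1+G)$ (Bruno's Condition~A), and (iii) a convergence theorem. Your observation that $R_\lambda=n-1$ makes $\lambda$ proportional to an integer vector, so all nonzero divisors $\langle\ell,\lambda\rangle-\lambda_i$ are uniformly bounded away from zero, is the right mechanism for eliminating the small-divisor obstruction. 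One point to tighten: in Step~1 you pass to the \emph{formal} normal form and then speak of the $\Phi_j$ as if they were transported by that formal change of variables; you should say explicitly that under a formal conjugacy the formal power series $\Phi_j$ become formal first integrals of the normal-form field, and that functional independence is preserved at the level of lowest-order terms. With that clarification your sketch is a correct roadmap, though the actual convergence proof (whether via Bruno or via Zung) is where all the analytic work lies and is only gestured at here.
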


For more detail on the formal and analytic transformations to the
normal form using Poincar\'e-Dulac normal form, one can consult
\cite{F. Dumortier2006,W.Walcher}.

\section{Mathematical Analysis and Nonstandard backgrounds}
We briefly recall some mathematical analysis concepts and results related to the problem in this paper.\\
A subset $S$ of a metric space $(\mathsf{X},d)$ is compact if every
sequence from $S$ has a sub sequence which converges to an element of $S$. If $\mathsf{X}$
is compact, we say the metric space itself is compact. 
\begin{theorem}[Heine-Borel Theorem]\cite{12}\label{v}
Every closed and bounded subset of $\mathbb{R}^{n}$ is compact.
\end{theorem}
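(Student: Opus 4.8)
The plan is to unwind the definition of compactness adopted above: it suffices to show that an arbitrary sequence $(x_k)_{k\ge 1}$ drawn from a closed and bounded set $S\subseteq \R^n$ has a subsequence that converges to a point of $S$. Since $S$ is bounded, there is an $M>0$ with $S\subseteq Q_0:=[-M,M]^n$, so the entire sequence lies in the closed box $Q_0$.

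The core of the argument is a Bolzano--Weierstrass extraction, which I would carry out by repeated bisection. Halving each edge of $Q_0$ splits it into $2^n$ congruent closed subboxes, and at least one of them, call it $Q_1$, contains $x_k$ for infinitely many indices $k$. Iterating produces a nested sequence $Q_0\supseteq Q_1\supseteq Q_2\supseteq\cdots$ of closed boxes with $\mathrm{diam}(Q_m)=2^{-m}\mathrm{diam}(Q_0)\to 0$, each still containing infinitely many terms of the sequence. Applying the nested interval property of $\R$ in each coordinate, the intersection $\bigcap_{m\ge 0}Q_m$ is a single point $x\in\R^n$. Picking indices $k_1<k_2<\cdots$ with $x_{k_m}\in Q_m$ (possible because each $Q_m$ contains terms of arbitrarily large index) gives $\|x_{k_m}-x\|\le \mathrm{diam}(Q_m)\to 0$, i.e. $x_{k_m}\to x$. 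An equivalent route is to invoke the one-dimensional Bolzano--Weierstrass theorem coordinate by coordinate, passing to a further subsequence at each of the $n$ stages.

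Closedness of $S$ then finishes the proof: each $x_{k_m}$ lies in $S$ and $x_{k_m}\to x$, so $x\in\overline{S}=S$. Hence every sequence from $S$ has a subsequence converging to an element of $S$, which is exactly the compactness of $S$.

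I expect the only genuine obstacle to be the Bolzano--Weierstrass step, since that is where the completeness (least-upper-bound property) of $\R$ is actually used; everything else is bookkeeping. It is worth remarking that in the nonstandard setting of this paper the extraction is almost immediate: for a standard set $S$ and an unlimited index $\omega$, boundedness forces $x_\omega$ to be limited, so $x:=\mathrm{st}(x_\omega)$ exists in $\R^n$, and the nonstandard characterization of closedness --- a standard set contains the standard part of every limited element of its nonstandard extension --- yields $x\in S$; a routine recursive selection then extracts an honest convergent subsequence.
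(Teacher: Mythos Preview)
Your argument is correct: the bisection construction yields a nested sequence of closed boxes with diameters tending to zero, the nested interval property gives a unique intersection point, the chosen subsequence converges to it, and closedness of $S$ places the limit in $S$. The alternative coordinate-by-coordinate extraction and the nonstandard sketch are also valid.

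However, there is nothing to compare against in the paper itself. The Heine--Borel theorem is stated here purely as background and is cited from reference~\cite{12} (Pugh, \emph{Real Mathematical Analysis}) without proof; the authors invoke it only to justify that $\mathbb{R}^3$ is locally compact in the proof of Proposition~\ref{I}. So your proposal supplies a proof where the paper deliberately omits one, and any ``comparison'' would be to the standard textbook treatment rather than to anything the authors wrote.
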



Hyperreal numbers are an extension of the real numbers, which contain infinitesimals and unlimited numbers. The set of hyperreal numbers is denoted by ${}^{*} \mathbf{R}$ and for simplicity we use  $\mathbf{R}$ without the star.\\
A number $x \in \mathbf{R}$  is unlimited or infinite if $|x|> n$ for all standard $n \in \mathbb{N} $, $x $ is  limited or finite if $|x|< n$ for some standard $n \in \mathbb{N} $ and  $x  $ is infinitesimal if $|x|< \frac{1}{n}$ for all standard $n \in \mathbb{N} $. Note that the number zero is the only standard  infinitesimal. A real number is appreciable if it is neither unlimited nor infinitesimal. A real number $x$ is called near-standard if $x$ is infinitely close to $y$ for some standard real number $y$, and denoted by $^{NS}$.
	
The two real numbers $x$ and $y$ are infinitely close  if $x-y$ is infinitesimal and is written as $x \simeq y$.\\

A number $x$ is limited in $\mathbf{R}$, then it is infinitely close to a unique standard real number; this unique number is called the standard part of $x$. It is also well known that
\begin{enumerate}
\item $\mbox{monad}(x)=\{y \in \mathbf{R}: x \simeq y\} $ for limited $x$, and is dented by $\mbox{m}(x)$.

\item $\alpha$-monad $(x)=\{y \in \mathbf{R}:\frac{y-x}{\alpha}  \simeq 0 \} $, and denoted by  $\alpha$-{m}(x).
	
 \item galaxy$(x)=\{y \in \mathbf{R}: x - y  \ \  is  \ \ \mbox{limited} \} $ for limited $x$, and is dented by $\mbox{gal}(x)$.

\item $\alpha$-galaxy$(x)=\{y \in \mathbf{R}:\frac{y-x}{\alpha} \ \ is \ \  \mbox{limited} \} $, and dented by $\alpha$-{gal}(x). 
 
\item $\alpha$-micromonad $(x)=\{y \in \mathbf{R}:y \leq {\alpha} ^{n}, \mbox{for all standard $n$}\}$ .(see \cite{4, 20}).

\end{enumerate}

 \begin{definition}\cite{1}
 Let  $F:D \to P$ and  $F_{0}: D_{0} \to P$ be mappings from the open subsets  $ D$ and $ D_{0}$ of
 the standard topological space $\mathsf{X}$ to the uniform space $P$, $F_{0}$ standard. The mapping $F$ is
 said to be perturbation of the mapping $F_{0}$, which is denoted by $F\simeq F_{0} $, if $ {}^{N S}
D_{0} \in D$ and   $F(x)\simeq F_{0}(x) $ for all $x \in {}^{N S}D_{0} $, where $x \in {}^{N S}D_{0} $
is $\{x\in \mathsf{X}: \exists^{st} x_0\in D_0, x \simeq x_0 \}$.
\end{definition}

This definition make a sense, because $D_{0} $ is being a standard open subset of $\mathsf{X}$, ${}^{NS}D_{0} \subset D$, so $F(x)$ and $F_{0}(x)$ are both defined for all $ x\in {}^{N S}D_{0} $. Let $ \Omega_{\mathsf{X},P}$ be the set of mappings defined on an open subsets of $\mathsf{X}$ to $P$, that is 
$$\Omega_{\mathsf{X},P}=\{(F,D):D\: \mbox{open subset of}\: \mathsf{X}\: \mbox{and} \: F:D \to P\}.$$
The topology of this set is defined as follows. Let $(F_{0},D_{0}) \in \Omega_{\mathsf{X},P}$. The family of sets of the form 
\[
\{(F,D) \in \Omega_{\mathsf{X},P}:K \subset D  \quad \forall x \in K \quad   (F(x),F_{0}(x)) \in U \}
\]
where $K$ is a compact subset of $D_{0}$ and $U$ is an entourage of the uniform space $P$ is a basis of the set of neighborhood of $(F_{0}, D_{0})$. This topology is known as topology of uniform convergence on compacta. If all the mappings are defined on the same open set $D$, this topology is the usual topology of uniform convergence on compacta on the set of functions on $D$ to $P$. For more details, see \cite{4,Joshi,3,1,Nader}.

\begin{proposition}\cite{1}\label{sari}
 Assume $\mathsf{X}$ is locally compact. The mapping $F$ is a perturbation of the standard mapping $F_{0}$ if and only if $F$ is infinitely close to $F_{0}$ for the topology of uniform convergence on compacta.
\end{proposition}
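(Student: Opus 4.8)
The plan is to unwind both sides of the equivalence into their external (nonstandard) meaning and then match them directly, using local compactness at exactly the point where it is indispensable. By the nonstandard description of a topology, saying that $F$ is infinitely close to $F_0$ for the topology of uniform convergence on compacta means precisely that $(F,D)$ lies in every \emph{standard} basic neighbourhood of $(F_0,D_0)$; since $F_0$ and $D_0$ are standard, these basic neighbourhoods are exactly the sets consisting of those pairs $(G,E)$ with $K\subset E$ and $(G(x),F_0(x))\in U$ for all $x\in K$, where $K$ ranges over standard compact subsets of $D_0$ and $U$ over standard entourages of $P$. Thus the topological side reads: for every standard compact $K\subset D_0$ and every standard entourage $U$ one has $K\subset D$ and $(F(x),F_0(x))\in U$ for all $x\in K$. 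Two elementary facts will be used throughout: the nonstandard characterisation of compactness (every point of a standard compact set is infinitely close to some standard point of that set), and the fact that in a uniform space $a\simeq b$ holds iff $(a,b)$ belongs to every standard entourage.

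For the forward implication, suppose $F$ is a perturbation of $F_0$, so that ${}^{NS}D_0\subset D$ and $F(x)\simeq F_0(x)$ for all $x\in{}^{NS}D_0$. Fix a standard compact $K\subset D_0$ and a standard entourage $U$. Given $x\in K$, compactness provides a standard $x_0\in K\subset D_0$ with $x\simeq x_0$, hence $x\in{}^{NS}D_0\subset D$; since $x\in K$ was arbitrary this gives $K\subset D$. Moreover $x\in{}^{NS}D_0$ forces $F(x)\simeq F_0(x)$, and as $U$ is standard this yields $(F(x),F_0(x))\in U$. Therefore $(F,D)$ lies in every standard basic neighbourhood of $(F_0,D_0)$, i.e. $F$ is infinitely close to $F_0$ for the topology.

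For the converse, assume $(F,D)$ lies in every standard basic neighbourhood of $(F_0,D_0)$, and let $x\in{}^{NS}D_0$, say $x\simeq x_0$ with $x_0\in D_0$ standard. Here local compactness enters: since $\mathsf{X}$, $D_0$ and $x_0$ are standard and $D_0$ is open, transfer of the statement ``each point of $D_0$ possesses a compact neighbourhood contained in $D_0$'' yields a standard compact neighbourhood $K$ of $x_0$ with $K\subset D_0$; because $K$ contains a standard open set around $x_0$, the monad of $x_0$ is contained in $K$, so $x\in K$. Applying the hypothesis to this $K$ gives $K\subset D$, whence $x\in D$; since $x\in{}^{NS}D_0$ was arbitrary, ${}^{NS}D_0\subset D$. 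Applying it to the same $K$ together with an arbitrary standard entourage $U$ gives $(F(x),F_0(x))\in U$ (as $x\in K$), and letting $U$ range over all standard entourages yields $F(x)\simeq F_0(x)$. Hence $F$ is a perturbation of $F_0$, which completes the equivalence.

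Regarding where the difficulty lies: both implications are formally short, and the only load-bearing ingredients are (i) that points of a standard compact set are near-standard in it, and (ii) the manufacture, around each standard point of $D_0$, of a standard compact neighbourhood still inside $D_0$. Step (ii) is precisely the hypothesis of the Proposition and is what lets the ``compacta'' topology detect the pointwise near-standard behaviour of $F$ on all of ${}^{NS}D_0$; without local compactness the converse implication breaks down, so that is the step I would treat as the crux.
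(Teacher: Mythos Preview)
The paper does not actually prove this proposition: it is quoted verbatim from Sari~\cite{1} and stated without proof, so there is no ``paper's own proof'' to compare against. Your argument is correct and is the standard one: you unfold ``$F$ infinitely close to $F_0$ for the compact--open topology'' as membership in every standard basic neighbourhood, use the nonstandard characterisation of compactness for the forward direction, and invoke local compactness (to produce a standard compact neighbourhood of a standard point of $D_0$) for the converse; your identification of step~(ii) as the load-bearing use of the hypothesis is exactly right.
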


\section{Computation of necessary conditions of integrability and linearizability}
The considering problem in this section is the problem of integrals.
This problem and other closely related problems, here, can be reduced to investigate the
polynomial ideals and its associate varieties.

In this section, we study the integrability and linearizability conditions at the origin of system (\ref{ch})
with $\delta=1$ and the variables and parameters are real numbers. The necessary conditions were
found by computing the conditions of the existence of two independent first integrals up to a degree.

Here, we are looking for two functionally independent first integrals 
\begin{equation}\label{phi}
\phi=x^3y+\ldots,
\end{equation}
and
\begin{equation}\label{psi}
\psi=z^3y+\ldots,
\end{equation}
which satisfies respectively, 
\begin{equation}
\chi\phi=\sum \mathcal{R}_\phi X,
\end{equation}

\begin{equation}
\chi\psi=\sum \mathcal{R}_\psi Y,
\end{equation}
where $X$ and $Y$ are monomial resonance and  $\mathcal{R}_\phi$ and $\mathcal{R}_\psi$ are
polynomials in the coefficients of the given system, which are obstacle of the existence of first
integrals \eqref{phi} and \eqref{psi}. The vanishing of these obstacles gives necessary and sufficient
conditions for integrability conditions.
Hence, the first integrals $\phi$ and $\psi$ exist whenever all $\mathcal{R}_\phi$ and
$\mathcal{R}_\psi$ are zero.
Moreover, let $\mathcal{B}$ be the ideal generated by $\mathcal{R}_\phi$ and $\mathcal{R}_\psi$.
We compute the variety of the ideal $\mathcal{B}$, and let us denote it by $V(\mathcal{B})$,
and then the irreducible decomposition of $V(\mathcal{B})$, variety of integral needs to be computed.
Note that the set of $\mathcal{R}_\phi$ and $\mathcal{R}_\psi$ involving infinitely many polynomials
and finding $\mathcal{B}$, in this case, is impossible. In fact, by the aid of Hilbert Basic Theorem
every ideal in the polynomial ring is finitely generated. Hence, the ideal $\mathcal{B}$ will stabilize
at some finite order. The degrees 16 of resonant monomials were used using computer system Maple. The irreducible decomposition were found by using the {\tt minAssGTZ} in computer algebra
system {\sc Singular} \cite{lib,DGPS}.

For linearizing, we use the same method of  computing the conditions for the existence of a
linearizing change of coordinates up to some finite order to find necessary conditions,
and exhibiting a linearizing change of variables for sufficiency.\\

Hence, this gives the following theorem.

Consider the three dimensional Lotka-Volterra system 
\begin{eqnarray}\label{2}
F_{0}=\left\{\begin{array}{rc} 
\dot{x}&=F_{01}=x(1+ax+by+cz),\\ \dot{y}&=F_{02}=y(-3+dx+ey+fz),\\
 \dot{z}&=F_{03}=z(1+gx+hy+kz), \end{array}\right.
\end{eqnarray}

where $F_{0}=(F_{01},F_{02},F_{03})$, and all variables and parameters are standard numbers.

\begin{theorem}\label{int}
The three dimensional Lotka-Volterra system \eqref{2} is integrable at origin if and only if the system satisfies each of the following conditions
\begin{equation*}
\begin{aligned}
		 	 1)  \: & g=f=d=c=0\\
		 	  2)  \: & f=d=c=g+a=0\\ 
	         	2^*)  \: & g=f=d=c+k=0 \\
		     3)  \: & g=f=d=a=0 \\
		      3^*) \: & f=d=c=k=0\\
		     4)  \: & h=g=f=b-e=a-d=0\\
		      4^*)  \: & d=c=b=f-k=e-h=0\\
		     5)  \: & f=e+h=d+g=c=b+h=a+g=0\\
		      5^*)  \: & g=f-k=e-h=d=c+k=b+h=0\\
		     6)  \: & f=2e+h=2d+g=c=2b+h=2a+g=0 \\
		     6^*)  \: & g=d=f-k=e-h=c+2k=b+2h=0\\
		     7)  \: & f=d=b-h=0\\
		     8)  \: & f=5e-3h=d+g=c=5b+h=3a-g=0\\
		     9) \: & f=d=c-k=a-g=0\\
		     10) \: & f=8e-3h=4d+3g=c=8b+h=4a-g=0\\
		     11) \: & f=8e-3h=4d+3g=c+2k=8b+h=4a-g=0\\
		     12) \: & k=f=c=b=2a+d=0 \\
		     12^*)  \: & h=g=f+2k=d=a=0\\
		     13) \: & h=g=f=b=c+k=a+d=0 \\
		      13^*) \: & h=f+k=d=c=b=a+g=0\\
		     14) \: & h=f=c=b=2a+d=0\\
		      14^*) \: & h=g=d=b=f+2k=0\\
		      15) \: & f=d+g=c=b=2a-g=0 \\
		      15^*) \: & h=g=d=f+2k=c-2k=0\\
		     16) \: & f=c=b=3d+2g=3a-g=0 \\
		      16^*) \: & h=g=d=c-3k=f+2k=0\\
		     17) \: & f=3d+2g=c+k=b=3a-g=0 \\
		      17^*) \: & h=f+2k=d=c-3k=a+g=0\\
\end{aligned}
\end{equation*}
\begin{equation*}
\begin{aligned}
		     18) \: & g=f=c=a+d=0 \\
		      18^*) \: & g=f+k=d=c=0\\
		     19) \: & f=k=c=a+d=0 \\
		     19^*) \: & g=d=a=f+k=0\\
		        20)  \: & f=c=b-h=a+d=0 \\
		     20^*) \: & g=f+k=d=b-h=0\\
		    21) \: & f=2d+g=c=2a-g=0 \\
		     21^*) \: & g=f+k=d=c-2k=0\\
		    22) \: & f=e+h=3d+g=2c-k=b=3a-g=0 \\
		     22^*) \: & h=f+k=c-3k=b+e=a-2g=d=0\\
		    23) \: & f+k=e+h=d+g=c-3k=b+h=a+g=0 \\
		     23^*) \: & f-k=e-h=3d+g=c+k=b+h=3a-g=0\\
		    24) \: & f+k=5e-3h=d+g=c+k=5b+h=3a-g=0\\
		    25) \: & h=g=f+k=c=b=a+d=0\\
		    26) \: & h=g=f+2k=c-2k=a+d=0\\
		    26^*) \: & f+k=d+g=c=b=2a-g=0 \\
		    27) \: & h=f+k=3d+2g=c=b=3a-g=0 \\
		     27*) \: & h=g=b=f+2k=c-3k=a+d=0\\
		    28) \: & h=f+k=3d+2g=2c-k=b=3a-g=0 \\
		     28^*) \: & h=f+2k=d+2g=c-3k=b=a-2g=0\\
		     29) \: & h=g=f+k=b-e=a-d=0 \\
		      29^*) \: &  f-k=e-h=c=b=a+d=0\\
		      30) \: & h=g=f+2k=b-e=a-d=0 \\
		     30^*) \: & f-k=e-h=c=b=2a+d=0\\
		      31) \: & f+k=e=3d+g=c-3k=b+h=3a-g=0\\
		    32) \: & cd-2dk-gk=f+k=b-h=a+d=0\\
		     33) \: & f+k=d+g=c-k=a-g=0\\
		    34) \: & h=e=b\\
		    35) \: & cd-2dk-2gk=h=f+2k=b=2a+d=0\\
		    36) \: & h=f+2k=3d+2g=c-3k=b=3a-g=0\\
		    37) \: & g=f+3k=e+3h=d=c-3k=b+5h=0\\
		    38) \: & g=f+3k=e+3h=d=c-4k=b+8h=0\\
\end{aligned}
\end{equation*}
\begin{equation*}
\begin{aligned}	 
		    39) \: & f+3k=e+3h=d=c-4k=b+8h=2a+g=0\\
		   40) \: & f+3k=e+3h=d-g=c-3k=b+5h=a+g=0\\
		    41) \: & ef+2ek-3hk=cd+3cg-2dk+fg-3gk=bk-ce+ek-hk\\&=bf+2bk-3ch-fh+hk=bd+3bg-de-dh-2eg=af+3ak-dk-3gk\\&=ae+3ah-de-3eg=ac-2ak-cd-3cg+2dk-fg+4gk=ab+2ah-bd-3bg+dh=0.
\end{aligned}
\end{equation*}	 
Furthermore, the conditions $(1)-(7),(9),(12)-(23),(25)-(36)$ are satisfied if and only if the system is
linearizable or one of the following holds
\begin{equation*}
\begin{aligned}
		 41.1\: &  dh-eg=k=f=c=b-e=a-d=0\\
		  41.1^* \: & bk-ch=g=f-k=e-h=d=a=0\\
		   41.2 \: & f-k=e-h=d-g=c-k=b-h=a-g=0.\\
\end{aligned}
\end{equation*}	 		 
 \end{theorem}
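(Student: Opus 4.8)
The plan is to treat the classification and its linearizable refinement by the two--step route standard for such resonant singular points: first pin the conditions down as the variety of an obstruction ideal (necessity), then verify integrability on each component separately (sufficiency). For \emph{necessity} I would set up the recursion for a formal first integral $\phi=x^3y+\cdots$: order by order the homological equation is solvable except on the resonant monomials, and the residues there are polynomials $\mathcal{R}_\phi$ in the parameters $a,\dots,k$; doing the same for $\psi=yz^3+\cdots$ produces the $\mathcal{R}_\psi$. Let $\mathcal{B}_N$ be the ideal they generate through order $N$. Hilbert's Basis Theorem guarantees the chain $\mathcal{B}_1\subseteq\mathcal{B}_2\subseteq\cdots$ stabilizes; a Maple computation through degree $16$ of the resonant monomials, followed by a minimal--associated--primes decomposition of $V(\mathcal{B}_{16})$ in \textsc{Singular} via \texttt{minAssGTZ}, should return exactly the list $1)$--$41)$ together with its starred mirrors. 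That $\mathcal{B}_{16}$ is already the stable ideal $\mathcal{B}$ is then argued \emph{a posteriori}: once every listed component is shown to consist of integrable systems, no higher--order obstruction can properly shrink it, so $V(\mathcal{B})=V(\mathcal{B}_{16})$ and the conditions are necessary and exhaustive. Running the same recursion for a linearizing change $X=x(1+\cdots)$, $Y=y(1+\cdots)$, $Z=z(1+\cdots)$ yields the obstruction ideal whose variety is the sublist flagged for linearizability.

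For \emph{sufficiency of integrability} I would go through the components one at a time, on each exhibiting two functionally independent analytic first integrals $x^3y(1+\cdots)$ and $yz^3(1+\cdots)$. The workhorse is Darboux theory: the planes $x=0$, $y=0$, $z=0$ are always invariant with linear cofactors, and on each component one or more further invariant algebraic surfaces (typically planes $1+\alpha x+\beta y+\gamma z=0$ or resonance--adapted surfaces) or exponential factors appear; from these one assembles a Darboux first integral $\Phi$ together with a Darboux inverse Jacobi multiplier $\mu$ satisfying $\chi(\mu)=\mu\,\div(\chi)$, and then Theorem~\ref{w} delivers the second independent first integral once the cross--product boundedness hypothesis there is checked. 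Several components force a variable to decouple (e.g.\ $c=f=0$ makes the $z$--dynamics autonomous on $z=0$, dually for $b=d=0$), reducing the problem to a planar Lotka--Volterra equation whose integrability is classical and lifts back; on components that put the spectrum in the Poincar\'e domain for a node with two analytic separatrices, the Remark following the Poincar\'e--Dulac discussion gives analytic linearizability, hence integrability, outright. On the few components where the real Darboux data are not rich enough to build $\mu$, the plan is to deform into the hyperreal family \eqref{ch} with $\delta=1+\ep$ ($\ep$ infinitesimal) or with $\delta$ itself infinitesimal, construct the Darboux integrals for the deformed field --- which is a perturbation of the standard one in the topology of uniform convergence on compacta by Proposition~\ref{sari} --- and pass to standard parts on the near--standard domain, using Heine--Borel (Theorem~\ref{v}) to control the compacta on which the convergence is uniform; the shadow is then analytic and of the prescribed shape.

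For \emph{linearizability}, on the flagged sublist $1)$--$7),\,9),\,12)$--$23),\,25)$--$36)$ I would either write down an explicit analytic linearizing change of variables, or --- having the integrability data in hand --- exhibit an extra eigenfunction $\zeta=x^{\alpha}y^{\beta}z^{\gamma}(1+\cdots)$ with $\chi(\zeta)=k\zeta$, $k=\alpha-3\beta+\gamma$, and invoke Theorem~\ref{w1}; on the decoupled planar restrictions linearizability is read off from Theorem~\ref{colin} by matching one of its conditions (i)--(iv). The three exceptional families $41.1)$, $41.1^*)$, $41.2)$ are precisely the loci inside the relevant integrability components where this auxiliary eigenfunction degenerates, so there a direct construction of the linearizing map is required. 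I expect the genuine obstacle to be neither the necessity computation (mechanical, if heavy) nor any single sufficiency argument in isolation, but the bookkeeping of the sufficiency step across the whole list and its mirrors: recognising, component by component, the correct auxiliary surfaces, exponential factors or coordinate reductions, verifying the boundedness hypothesis of Theorem~\ref{w} each time, and isolating exactly those few components that only the nonstandard deformation can handle.
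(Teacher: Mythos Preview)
Your high-level architecture matches the paper almost exactly: the necessity side (obstruction recursion through degree $16$ in Maple, \texttt{minAssGTZ} decomposition in \textsc{Singular}, a posteriori stabilization of the ideal once sufficiency is done) is what the paper does, and your catalogue of sufficiency tools --- Darboux surfaces and exponential factors, an inverse Jacobi multiplier together with Theorem~\ref{w}, decoupling to a planar saddle and invoking Theorem~\ref{colin}, linearizable nodes in the Poincar\'e domain, Theorem~\ref{w1} for linearizability from an eigenfunction --- is precisely the kit the paper deploys case by case.

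The one place your plan diverges, and would not work as written, is the nonstandard step. You propose to deform the eigenvalue parameter $\delta$ to $1+\ep$ (or to an infinitesimal), build Darboux data for the deformed field, and take standard parts via Proposition~\ref{sari}. But the $\delta$-deformation is a pure rescaling: the substitution $x_1=\delta^{-1}x$, $y_1=\delta^{-1}y$, $z_1=\delta^{-1}z$, $\tau=\delta t$ (equation~\eqref{38}) carries system~\eqref{ch} identically back to system~\eqref{2}, so the deformed field has exactly the same invariant surfaces and first integrals as the undeformed one and nothing is gained on the hard components. What the paper actually does in the two cases (17 and 31) that resist the standard Darboux/IJM machinery is different: it treats a \emph{coefficient} parameter --- specifically $a$ --- as infinitesimal, which under that case's linear constraints forces $d$ and $g$ infinitesimal as well, and then analyses the simplified system (e.g.\ in Case~17, $\dot x\simeq x(1-kz)$, $\dot y\simeq y(-3+ey)$, $\dot z\simeq z(1+hy+kz)$), which after an explicit monomial substitution lands in the Poincar\'e domain and is analytically linearizable. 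So the nonstandard input in the sufficiency proof is a degeneration inside the coefficient variety, not a perturbation of the resonance weights; Proposition~\ref{sari} and the $\delta$-family play no role in proving Theorem~\ref{int} itself --- they belong to the later section on hyperreal first integrals.
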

\begin{proof}\	The dual cases are obtained under the transformation \((x,z)\)$\longrightarrow$ \((z,x)\) and we do not consider them further. The other cases are considered below.\\
	
\textbf{Case 1}: In this case, the system takes the form 
\begin{equation} \label{3}
\dot{x}=x(1+ax+by),\qquad \dot{y}=y(-3+ey),\qquad  \dot{z}=z(1+hy+kz).
\end{equation}
The change of coordinate $Y=\frac{y}{1-\frac{e}{3}y}$ linearizes the second equation to $\dot{Y}=-3Y$. To linearize the third equation, we seek an invariant algebraic surface of the form $\alpha(Y)+\beta(Y)z=0$ with cofactor $hy+kz$, where $\alpha$ and $\beta$ are analytic so that $\alpha(0)=1$. Thus, $\alpha$ and $\beta$ must satisfy the equations
\begin{equation} \label{4}
-3Y	\dot{\alpha}=h y  \alpha,
\end{equation} 
\begin{equation*} \label{5}
	3Y	\dot{\beta}-\beta=-k \alpha.
\end{equation*} 
The function $\alpha=(1+\frac{e}{3}Y)^{(-\frac{h}{e})} =\sum\nolimits_{n \ge 0} \eta Y^n$
\hspace{0.1cm} with $ |\frac{e}{3}Y |<1$ where $ \eta= \frac{(-1)^{n} (\frac{e}{3})^n \prod_{i=-1}^{n-1}
(\frac{h}{e}+i) }{n! (\frac{h}{e}-1)}$ satisfied equation (\ref{4}). It can also be seen that
$\beta=\sum\nolimits_{n \ge 0} \frac{- \eta  k  }{3n-1} Y^{n-\frac{1}{3}}+c_{1}Y^{\frac{1}{3}} $.
Similarly to linearize the first equation of (\ref{3}), we proceed as before by seeking an invariant
algebraic surface of the form $\tilde{\alpha}(Y)+\tilde{\beta}(Y)x=0$ with cofactor $ax+by$.
Finally, the change of variables $(X,Z)=(\frac{x}{\tilde{\alpha}(Y)+\tilde{\beta}(Y)x},\frac{z}{\alpha(Y)+\beta(Y)z})$, will linearize the first and third equations.\\

\textbf{Case 2}: The system reduces to
\begin{equation}\label{7}
				\dot{x}=x(1+ax+by-kz),\qquad \dot{y}=y(-3+ey),\qquad \dot{z}=z(1+hy+kz).
\end{equation}  
To linearize the system, the second and third equations are the same as equations in previous case,
so it suffices to linearize the first equation only. We investigate the existence of an invariant algebraic
surface $ A(Y,Z)+B(Y,Z)x=0$ with cofactor $ax+by-kz$, where $A$ and $B$ are analytic with
$A(0,0)=1$. The change of coordinate $ X=\frac{x}{A(Y,Z)+B(Y,Z)x}$ will linearize the first equation.
To find such $A$ and $ B$, they must satisfy the equations
		\begin{equation} \label{8}
				\dot{A}=(by-kz)  A,
		\end{equation} 
			\begin{equation} \label{9}
		\dot{B}+B=a A.
		\end{equation} 
	We write $A=\exp(\alpha(Y,Z))$ and put in equation(\ref{8}), we get 
	$\dot{\alpha}(Y,Z)=by(Y,Z)-kz(Y,Z)$ and 
	let  $\alpha=\sum\nolimits_{i+j \ge 0} \alpha_{ij} Y^i Z^{j} $.\\
Then $\dot{\alpha}(Y,Z)=\sum\nolimits_{i+j \ge 0} \alpha_{ij}(j-3i) Y^i Z^{j}=by(Y,Z)-kz(Y,Z)=\sum\nolimits_{i+j \ge 0} \gamma_{ij} Y^i Z^{j}$, therefore $\alpha_{ij}=\frac{\gamma_{ij}}{j-3i}
$ for $i+j>0$ and the convergence of $\sum\nolimits_{i+j \ge 0} \gamma_{ij} Y^i Z^{j}$ imply the
convergence of $\alpha$ and hence of $A$. If $A=\sum\nolimits_{i+j \ge 0} a_{ij} Y^i Z^{j}$ and
$B=\sum\nolimits_{i+j \ge 0} b_{ij} Y^i Z^{j}$ then from  equation (\ref{9}), we get
$B=\sum\nolimits_{i+j \ge 0} 
	 a \ a_{ij}/(j-3i-1) Y^i Z^{j}$ which is convergent if $j\neq 3i+1$,
then the system is linearizable.\\
	 
\textbf{Case 3}: \  Under the conditions of integrability, the system reduces to
		\begin{equation*}
			\dot{x}=x(1+by+cz),\qquad \dot{y}=y(-3+ey),\qquad \dot{z}=z(1+hy+kz). 
		\end{equation*}  
In this case, we try to linearize first equation and the rest are similar as in Case 1. The change of variable $X=x \exp(-A)$ will give $\dot{X}=X$ whenever $A$ satisfies
	\begin{equation}\label{11}
 \dot{A}(Y,Z) =  by(Y)+c z(Y,Z).
	\end{equation}  
We put  $A=\sum\nolimits_{i+j \ge 0} a_{ij} Y^i Z^{j} $. Then $\dot{A}(Y,Z)=\sum\nolimits_{i+j \ge 0} a_{ij}(3i-j) Y^i Z^{j}$, and it is obvious  equation (\ref{11}) can be solved
analytically for $A$ with $z(Y,Z)$ contains no term of the type $(YZ^{3})^{n}$.
Since $z(Y,Z)=\frac{Z \alpha(Y)}{1-Z \beta(Y)}=\sum \alpha \beta^{i-1} Z^{i}$.\\
If we let $i-1=3m$, then it is enough to show that $\alpha \beta^{3m-1}$ contains no term in $Y^{m}$.
Simple calculation shows
   	$$\alpha \beta^{3m-1}=\sum\nolimits_{n \ge 0} \eta Y^n \sum\nolimits_{p \ge 0} C_{p}^{3m-1}(c_{1})^{-(p+3m-1)} (\sum\nolimits_{n \ge 0} -\frac{\eta k}{3n-1} Y^n)^{p}  Y^{m-\frac{p-1}{3}},$$
which does not contain any term of  $Y^{m}$ and we have established the existence of linearizing change of coordinates.\\
   		
\textbf{Case 4}:   The system has the form 
\begin{equation*}
\dot{x}=x(1+dx+ey+cz),\qquad \dot{y}=y(-3+dx+ey),\qquad  \dot{z}=z(1+kz).
\end{equation*}
The change of coordinates $X=\frac{x}{1+dx-\frac{e}{3}y}$ and  $Y=\frac{y}{1+dx-\frac{e}{3}y}$ gives 
\begin{equation*} 
	\dot{X}=X(1-cz+dczX),\qquad \dot{Y}=Y(-3-dezX),\qquad \dot{z}=z(1+kz).
\end{equation*} 
The first and third equations are linearizable node and the change of coordinates are
$\tilde{X}=\tilde{X}(X,z)$ and $Z=Z(z)$. For linearizing the second equation, it is enough  to find 
$\Gamma(\tilde{X},Z)$  such that $\dot{\Gamma}=d\:e\:z\:X$ and the change of coordinate
$\tilde{Y}=Y e^{\Gamma}$ gives $\dot{\tilde{Y}}=-3 \tilde{Y}$. Let   $\Gamma=\sum\nolimits_{i+j \ge 0} \gamma_{ij} \tilde{X}^i Z^j$ \  and \ $X(\tilde{X},Z)z(Z)=\sum\nolimits_{i+j \ge 0} \eta_i \tilde{X}^{i} Z^{j} $.\\
We find that
	$$\dot{\Gamma}(\tilde{X},Z)=\sum\nolimits_{i+j \ge 0} (i+j) \gamma_{ij} \tilde{X}^i Z^j= \sum\nolimits_{i+j \ge 0} d e \eta_i \tilde{X}^{i} Z^{j},$$
Therefore, we see that $\gamma_{ij}=\frac{de\eta_{ij}}{i+j}$, which guaranties a convergent for $\Gamma$.\\	

\textbf{Case 5}: The conditions in this case yield the system
\[
	\dot{x}=x(1-gx-hy),\qquad  \dot{y}=y(-3-gx-hy),\qquad \dot{z}=z(1+gx+hy+kz).
\]
The system admits invariant algebraic surfaces $\l_1 =1+kz+\frac{1}{2} kz(hy-gx)=0$ and  $\l_2 =1-gx+\frac{1}{3}hy=0 $ with respective cofactors $kz$ and $-gx-hy$. Then the transformation
$X=x \hspace{0.1cm} l^{-1}_{2},Y=y \hspace{0.1cm} l^{-1}_{2},Z=z \hspace{0.1cm} l^{-1}_{1} \hspace{0.1cm} l_{2}$ linearizes the system to $	\dot{X}=X ,\dot{Y}=-3Y ,\dot{Z}=Z $.\\

\textbf{Case 6}:  The system becomes
	\begin{equation*}
		\dot{x}=x(1+dx+ey),\qquad  \dot{y}=y(-3+dx+ey),\qquad \dot{z}=z(1-2dx-2ey+kz).
	\end{equation*}
In this case we have two invariant algebraic surfaces $\l_1 =1+dx-\frac{1}{3} ey=0$ with cofactor
$dx+ey$ and  $\l_2 =1+kz(1+ dx-ey+\frac{1}{2}(dx)^{2}+\frac{1}{5}(ey)^{2})=0$ with cofactor $kz$,
which allow us to linearize the system by the substitution
	$(X,Y,Z)=(x \hspace{0.1cm} l^{-1}_{1},y \hspace{0.1cm} l^{-1}_{1},z \hspace{0.1cm} l^{2}_{1} \hspace{0.1cm} l_{2}^{-1})$.\\
	
\textbf{Case 7}: The integrability conditions imply the system 
\begin{equation*}
\dot{x}=x(1+ax+hy+cz),\qquad \dot{y}=y(-3+ey),\qquad \dot{z}=z(1+gx+hy+kz).
\end{equation*}
The change of variables $X=x \ n(y)$ and $Z=z \ n(y)$ for some analytic function $n(y)$ with
$n(0)=1$, then the system takes the form
\begin{equation*}
\dot{X}=X(1+aXn^{-1}+cZn^{-1}+hy+\frac{\dot{n}}{n} y(-3+ey)), \\
\end{equation*}
\begin{equation}\label{16}
\begin{aligned}
\dot{y}= y(-3+ey), \\ 
\end{aligned}
\end{equation}
\begin{equation*}
\dot{Z}=Z(1+gXn^{-1}+kZn^{-1}+hy+\frac{\dot{n}}{n} y(-3+ey)).
\end{equation*}
We can select $n$ such that $1+hy+\frac{\dot{n}}{n} y(-3+ey)=\frac{1}{n}$ with $n(0)=1$.\\
This equation, clearly, is an ordinary first order linear differential equation
	$$ \dot{n}-\frac{1}{3} \frac{1+hy}{y(1-\frac{e}{3}y)}n=-\frac{1}{3y(1-\frac{e}{3}y)}.$$
Therefor, its solution is
	$$n=-\frac{1}{3} e^{\int -\frac{1+hy}{3y(1-\frac{e}{3}y)}dy} ( \int  e^{\int \frac{1+hy}{3y(1-\frac{e}{3}y)}dy}   y^{-1}(1-\frac{e}{3} y)^{-1}  dy+C).$$
Manipulating some calculations gives
\begin{equation*}
\begin{aligned}
	n&=-\frac{1}{3}y^{-\frac{1}{3}}(1-\frac{e}{3})^{\frac{h}{e}+\frac{1}{3}} \int y^{-\frac{2}{3}}(1-\frac{e}{3}y)^{\frac{h}{e}+\frac{4}{3}}dy\\
	&=-\frac{1}{3}(1-\frac{e}{3}y)^{\frac{h}{e}+\frac{1}{3}} \sum_{i=0}^{\infty}(-1)^{i} \frac{\prod_{j=0}^{i} (\frac{4}{3}+\frac{h}{e}-(j-1))}{i!(\frac{4}{3}+\frac{h}{e}+1)(i+\frac{1}{3}))} (-\frac{e}{3}y)^{i},
\end{aligned}
\end{equation*}
which can be seen to be analytic in $y$ with $n(0)=1$. Then  rescaling it by $\frac{1}{n(y)}$, the system (\ref{16}) becomes 
\begin{equation*}
	{X'}=X(1+aX+cZ), \qquad
	{y'}=y(-3+ey) n(y), \qquad
	{Z'}=Z(1+gX+kZ).
\end{equation*}
where $ '=\frac{d}{d \tau} $ and $t=n \tau$ .The first and third equations gives a linearizable node. For linearizing the second equation, we substitute $Y=m(y)$, such that
	$$y(-3+ey)n(y) \frac{dm(y)}{dy}=-3m(y),\quad   m(0)=0,\quad m'(0)=1,$$
put $n(y)$ in equation above then it is analytically solvable .\\

\textbf{Case 8}: The system takes the form
\begin{equation*}
\dot{x}=x(1+ax+by),\qquad \dot{y}=y(-3-3ax-3by),\qquad \dot{z}=z(1+3ax-5by+kz),
\end{equation*}
which has two independent first integrals $\phi_1=x^{3} y$ and $\phi_2=y z^{3} l^{-3}$, where $\l=(1+ax+by)^{2}+kz(1+\frac{1}{2} ax)-by=0$ is an invariant algebraic surface with cofactor $2ax-6by+kz$.\\

\textbf{Case 9}: Here the corresponding system is 
\begin{eqnarray}\label{19}
\dot{x}=x(1+ax+by+kz),\quad \hspace{0.1cm} \dot{y}=y(-3+ey),\quad \hspace{0.1cm} \dot{z}=z(1+ax+hy+kz).
\end{eqnarray}
The system has an invariant algebraic curve  $\l = 1-\frac{1}{3}ey=0$ with cofactor $ey$.
Then there is a first integral $\Phi=xz^{-1}l^{\frac{h-b}{e}}$ and an inverse Jacobi multiplier $x^{2}y^{\frac{4}{3}}zl^{(
	\frac{2}{3}-{\frac{b}{e}})}$. Thus, the first integral will be of the form $\Psi=x^{-1}y^{\frac{-1}{3}}(1+O(x,y,z)) $. The desired first integrals are $\phi_1=\Psi^{-3}=x^{3} y (1+O(x,y,z)$  and $\phi_2=(\Phi\Psi)^{-3}=y z^{3} (1+O(x,y,z)$. However, if $ \xi=y(1-\frac{e}{3}y)^{-1}$  satisfies $\dot{\xi}=-3 \xi$, then the system (\ref{19}) must also be linearizable by Theorem \ref{w1}.\\

\textbf{Case 10}: The system of this case is written as 
	\begin{equation*}
		\dot{x}=x(1+ax+by),\qquad \dot{y}=y(-3-3ax-3by),\qquad \dot{z}=z(1+4ax-8by+kz).
	\end{equation*}
We construct two independent first integrals $\phi_1=x^{3} y$  and $\phi_2=y z^{3} l^{-3}$, where
$\l = (1+ax+by)^{3}-3a b^{2}x y^{2}+kz(1+ax-by-\frac{1}{3}(ax)^{2}-2a b x y-\frac{1}{5}(by)^{2})=0$
is invariant algebraic  surface with cofactor  $3ax-9by+kz$.\\

\textbf{Case 11}: The system has the form 
\[
\dot{x}=x(1+ax+by-2kz),\qquad \dot{y}=y(-3-3ax-3by),\qquad  \dot{z}=z(1+4ax-8by+kz).
\]
This case has an invariant algebraic surface $\l = (1+ax+by)^{3}+kz(3ax+2(ax)^{2}-6a b x y+akxz)=0$ with cofactor  $3ax-9by$ which yields the Darboux integral $\Phi=x y z^{2} l^{-2}$ and an inverse Jacobi multiplier $x y^{\frac{4}{3}} z^{2}l^{\frac{2}{3}}$. Therefor the other first integral is  of the form
$\Psi=y^{\frac{-1}{3}} z^{-1}(1+O(x,y,z))$ by Theorem \ref{w}.
The two independent first integrals will be desired as $\phi_1=\Phi^{3} \Psi^{6}=x^{3} y (1+\cdots)$  and $\phi_2=(\Psi)^{-3}=y z^{3}(1+\cdots)$.\\

\textbf{Case 12}: The system in this case is 
	\begin{equation*}\
		\dot{x}=x(1+ax),\qquad \dot{y}=y(-3-2ax+ey),\qquad \dot{z}=z(1+gx+hy).
	\end{equation*}
The linearizing change is given by  $X= x l^{-1}_{1} $, $Y= y l^{3}_{1} l^{-1}_{2}$ and $Z=z l^{({\frac{h}{e}}-\frac{g}{a})}_{1} l^{-\frac{h}{e}}_{2}$, where $\l_{1} = 1+ax=0$ with cofactor $ax$ and $\l_{2} =1+ax-\frac{1}{3}ey=0$ with cofactor $ax+ey$ are invariant algebraic surfaces.\\

\textbf{Case 13}: The reduced system is 
\begin{equation*}
\dot{x}=x(1+ax-kz),\qquad \dot{y}=y(-3-ax+ey), \qquad \dot{z}=z(1+kz).
\end{equation*}
The change of coordinates $(X,Y,Z)=(x \hspace{0.1cm} l^{-1}_{1} l_{3},y \hspace{0.1cm} l_{1} l^{-1}_{2} l_3,z \hspace{0.1cm} l^{-1}_{3} )$ will linearize the system where
 $\l_1=1+ax+\frac{1}{2}akxz$, $\l_2 =1-\frac{1}{3}ey+kz+\frac{1}{6}aexy-\frac{1}{3}ekyz$ and $l_{3}=1+kz$.\\
 
\textbf{Case 14}: The system can be written in the form  
	\begin{equation*}
		\dot{x}=x(1+ax),\qquad  \dot{y}=y(-3-2ax+ey),\qquad  \dot{z}=z(1+gx+kz).
	\end{equation*}
The system has invariant surfaces $\l_{1} = 1+ax=0$, $\l_{2} =1+ax-\frac{1}{3}ey=0$ with cofactors
$ax$ and $ax+ey$, respectively.
The first and third equations are decoupled in the variables $x$ and $z$ which give linearizable
node and by the change of variables $X=x(1+O(x,z))$ and $Z=z(1+O(x,z))$ will linearize $\dot{X}=X$, $\dot{Z}=Z$.
The second equation is linearized by the substitution $Y= y l^{3}_{1}  l^{-1}_{2}$
which gives $\dot{Y}=-3Y$.\\

\textbf{Case 15}: The system is of the form  
	\begin{equation*}
		\dot{x}=x(1+ax),\qquad \dot{y}=y(-3-2ax+ey),\qquad \dot{z}=z(1+2ax+hy+kz).
	\end{equation*}
This system will be transformed to 
 \begin{equation}\label{26}
	\dot{X}=X,\qquad  \dot{Y}=Y(-3+eY),\qquad  \dot{Z}=Z(1+hY+kZ),
\end{equation}
where $X=\frac{x}{1+ax}$, $Y=\frac{y}{1+ax}$ and $Z=\frac{z}{1+ax}$.\\The second equation of (\ref{26}) will be linearized by using the substitution $\hat{Y}=\frac{Y}{1-\frac{1}{3} e Y}$.
For linearizing third equation in (\ref{26}), the idea is the same in Case 1.\\

\textbf{Case 16}: The system is of the form 
	\begin{eqnarray*}
		\dot{x}=x(1+ax),\qquad  \dot{y}=y(-3-2a x+ ey),\qquad \dot{z}.=z(1+3ax+hy+kz).
	\end{eqnarray*}
We have two invariant algebraic surfaces $l_{1}=1+ax=0$ and $l_{2}=1+ax-\frac{1}{3}ey=0$ with
respective cofactors $ax$ and $ax+ey$, respectively. The substitution $(X,Y)=(\frac{x}{1+ax},
\frac{y(1+ax)^{3}}{1+ax-\frac{1}{3}ey})$ will linearize the first and second equations to $\dot{X}=X $
and $\dot{Y}=-3Y$. For linearizing the third equation, we seek an algebraic surface of the form
$l=\gamma+\gamma_{1} z=0$ with cofactor $3ax+hy+kz$ with $\gamma_{1}=\gamma_{1}(X,Y)$
and $\gamma=\gamma(X,Y)$ such that $\gamma(0,0)=1 $. The form
$Z=\frac{z}{(\gamma+\gamma_{1} z)}$ will linearize the third one.
To find each of $\gamma$ and $\gamma_{1}$ we have to solve the following two differential equations 
	\begin{equation}\label{27}
		\dot{\gamma_{1}}-\gamma_{1} = k \gamma,
	\end{equation}
		\begin{equation}\label{28}
		\dot{\gamma} =(3ax+hy) \gamma.
	\end{equation}
We write $\gamma=e^{\eta(X,Y)}$ and from equation (\ref{28}) we obtain
\[
		 \dot{\eta}(X,Y)=3ax(X,Y)+hy(X,Y).
\]
Some calculations show
\[
		   y= \sum\nolimits_{n \ge 1} (-\frac{e}{3})^{n-1} Y^{n} (1-aX)^{3n+2},
\]
which contain no terms of the form $(X^3Y)^{n}$ because always $3n+2>3n$. \\

\textbf{Case 17}: The system takes the form 
\begin{equation*}
	\dot{x}=x(1+ax-kz),\qquad \dot{y}=y(-3-2ax+ey),\qquad  \dot{z}=z(1+3ax+hy+kz).
\end{equation*}
In this case we can neither linearize the system nor obtain two independent first integrals  for
the system in classical way. So, if we use nonstandard approach like when $a \simeq 0$ implies
that $d \simeq 0$ and $g \simeq 0$, then the system becomes
\begin{equation*}
	\dot{x} \simeq x(1-kz),\qquad  \dot{y} \simeq y(-3+ey),\qquad  \dot{z} \simeq z(1+hy+kz).
\end{equation*}
The change of coordinate of $X=xz$ gives 
\begin{equation*}
	\dot{X} \simeq X(2+hy),\qquad  \dot{y} \simeq y(-3+ey),\qquad  \dot{z} \simeq z(1+hy+kz).
\end{equation*}
The system involvs an algebraic curve $l=1-\frac{e}{3}y=0$ with cofactor $ey$,
then the substitution  $(\hat{X},Y)=(X l^\frac{-h}{e},y l^{-1})$ will linearize first and second equations.
For linearizing third equation, it is the same as in Case 1 as considered before.\\

\textbf{Case 18}: The system reduces to  
	\begin{equation*}
		\dot{x}=x(1+ax+by),\qquad \dot{y}=y(-3-ax+ey),\qquad \dot{z}=z(1+hy+kz).
	\end{equation*}
We can see that the first and second equations are linearizable saddle by Theorem \ref{colin}
part {\it i}. For linearizing the third equation, we seek an invariant algebraic surface of the form
$A+Bz=0$ with cofactor $hy+kz$ with $A=A(X,Y)$ and $B=B(X,Y)$ such that $A(0,0)=1 $.
Then the linearizable change is given by  $Z=\frac{z}{(A+B z)}$. To find such $A$ and $B$,
we have to solve 
\begin{equation}\label{30}
	\dot{A} =A hy,
\end{equation}
\begin{equation*}
\dot{B}+B = k A.
\end{equation*}
We now write $A=\exp(\eta(X,Y))$ and from equation (\ref{30}) we obtain $ \dot{\eta}=hy$.\\
To solve this equation, let $\eta= \sum\nolimits_{i+j>0} a_{ij} X^{i} Y^{j}$.\\ 
Then
 $ \dot{\eta}(X,Y)= \sum\nolimits_{i+j>0} (i-3j)a_{ij}X^{i} Y^{j}=hy(X,Y)=b_{0,1} hY+ \sum\nolimits_{i+j>1}b_{ij} X^{i} Y^{j}$, for some $b_{i,j}$. Note that $a_{1,0}=0, a_{0,1}=\frac{-h}{3} $ and $a_{ij}=\frac{b_{ij}}{(i-3j)}$ for $i+j>1$. The convergence of $\sum\nolimits_{i+j>0} b_{ij} X^{i} Y^{j}$ guarantees the convergence of $\eta$ and hence of $A$. It remains to show that the inverse transformation $z=\frac{AZ}{1-BZ}=\sum\nolimits AB^{n}Z^{n+1}$ does not involve the term $X^{3}Y$.\\
Suppose $n+1=3m$ for some $m$. It suffices to show that $AB^{3m-1}$ has no term $Y^{m}$.
Direct calculation shows
\begin{equation*}
\dot{B} =\frac{dB}{dt}=-3Y \frac{dB}{dY},
\end{equation*}
\begin{equation*}
\frac{-3Y}{3m} \frac{d B^{3m}}{dY}-B^{3m}=kAB^{3m-1}.
\end{equation*}
Clearly, the coefficient in $Y^{n} $ in $B^{3 m}$ vanishes on the right hand side so that either $k=0$, that mean $B\equiv 0$, or the coefficients of $Y^{n}$ in $B^{3m-1}$ vanishes. Therefore $Z=\frac{z}{A+Bz}$ contains no term like $(X^{3}Y)^n$.
Thus, we found the existence of a linearizing transformation.\\ 

\textbf{Case 19}: The system in this case is 
	\begin{equation*} 
		\dot{x}=x(1+ax+by), \qquad \dot{y}=y(-3-ax+ey), \qquad \dot{z}=z(1+gx+hy).
	\end{equation*}
The first two equations give a linearizable saddle by Theorem \ref{colin}, part {\it i}
and hence $\dot{X}=X,\quad \dot{Y}=-3 Y$. \\
Using the substitution $Z= z \exp(-\gamma(X,Y))$ to linearize the remaining equation.
Now, it suffices to find $\gamma(X,Y)$ such that
\begin{equation} \label{33}
 \dot{\gamma}  =g x (X,Y)+h y(X,Y).
\end{equation} 
Let
 \begin{equation} \label{34}
\gamma(X,Y)= \sum\nolimits_{i+j>0} \alpha_{ij} X^{i} Y^{j},
 \end{equation} 
 then  
 	\begin{equation*} 
g x(X,Y)+hy(X,Y)= \sum\nolimits_{i+j>0} \beta_{ij} X^{i} Y^{j}.
 \end{equation*} 
 After differentiating (\ref{34}) and equating with (\ref{33}) we obtain
 	\begin{equation*} 
 \begin{aligned}
 \sum\nolimits_{i+j>0} (i-3j)\alpha_{ij}X^{i} Y^{j}=\sum\nolimits_{i+j>0} \beta_{ij} X^{i} Y^{j}\\ 
 \end{aligned} 
 \end{equation*}
   Then we can choose $\alpha_{ij}=\frac{\beta_{ij}}{(i-3j)}$ for $i\neq 3j$ and the convergence follows from the convergence of $\sum\nolimits_{i+j>0} \beta_{ij} X^{i} Y^{j}$. Thus the system is linearizable.\\
   
\textbf{Case 20}:
The system with integrability conditions has the form 
\begin{eqnarray*}
	\dot{x}=x(1+ax+by),\qquad \dot{y}=y(-3-ax+ey),\qquad \dot{z}=z(1+gx+hy+kz).
\end{eqnarray*}
The first and second equations are linearizable saddle by Theorem \ref{colin}, part {\it i}.
The third equation can be  linearized in the similar way of Case 16.\\

\textbf{Case 21}: The system is written as
\begin{eqnarray*}
	\dot{x}=x(1+ax+by),\qquad \dot{y}=y(-3-ax+ey),\qquad  \dot{z}=z(1+2ax+hy+kz).
\end{eqnarray*}
The first two equations are linearizable saddle by Theorem \ref{colin}, part {\it i}.
The linearization of the third equation is  similar to the Case 16.\\

\textbf{Case 22}: The corresponding system in this case is
\begin{equation*}
	\dot{x}=x(1+ax+cz),\qquad \dot{y}=y(-3-ax-hy),\qquad \dot{z}=z(1+3ax+hy+2cz),
\end{equation*}
 which has an invariant algebraic surface $\l = (1+ax)^{2}+cz(2+h y)=0$ with cofactor  $2(ax+cz) $.
We construct a first integral of the form $\Phi=x^{2} y z l^{-2}$ with an inverse Jacobi multiplier $IJM=x^{4} y^{2} z l^{\frac{-1}{2}}$, then the other first integral is  $\Psi=x^{-3} y^{-1} l^{\frac{-3}{2}} $
by Theorem \ref{w}.
Now, the required first integrals are $\phi_1=\Phi^{3} \Psi^{2}= y z^{3} (1+\cdots)$  and $\phi_2=(\Psi)^{-1}= x^{-1} y (1+\cdots)$. Since  $ \xi=xl^{-\frac{1}{2}}$  satisfies  $\dot{\xi}=  \xi$, then the system in this case must also be linearizable by Theorem \ref{w1}.\\

\textbf{Case 23}: The system 
\begin{eqnarray*}
		\dot{x}=x(1-gx-hy+3kz), \hspace{0.3cm} \dot{y}=y(-3-gx-hy-kz), \hspace{0.3cm} \dot{z}=z(1+gx+hy+kz),
	\end{eqnarray*}
	 has an invariant algebraic surface $\l = (1+kz)^{2}+kz(hy-gx)=0$ with cofactor $ 2kz $.
	We find a first integral $\Phi=x y z^{2} l^{-2}$ and an inverse Jacobi multiplier $IJM=x y^{1/2} z^{-1/2} l$ and the other first integral 
	$\Psi= y^{\frac{1}{2}}z^\frac{3}{2}(1+...)$ is guaranteed by Theorem \ref{w}.
	  The two first integrals in standard form are $\phi_1=\Phi^{3} \Psi^{-4}=x^{3} y (1+\cdots)$  and $\phi_2=(\Psi)^{2}=  y z^{3} (1+\cdots)$. The substitution $(X,Y,Z)=(\frac{\Phi}{yz^{3}},\frac{\phi_{2}}{z^{3}},\frac{\Phi}{x y})$ linearizes the system.\\

\textbf{Case 24}: Under these conditions the corresponding system has the form 
	\begin{eqnarray*}
		\dot{x}=x(1+ax+by-kz),\qquad \dot{y}=y(-3-3ax-3by-kz),\qquad \dot{z}=z(1+3ax-5by+kz).
	\end{eqnarray*}
	The system has two invariant algebraic surfaces $\l_{1} = (1+ax+by)^{2}+kz(ax-by)=0$ and $\l_{2}=(1+ax+by+kz)^{2}-4bkyz=0$ with respective cofactors $ 2ax-6by $ and $ 2ax-6by+2kz $.
	Then the system has two independent first integrals  $\phi_1=x^{3} y (l_{1}l_{2})^{2})$ and $\phi_2= y z^{3} l_{1}^{2} l_{2}^{-1}$.\\

\textbf{Case 25}: The corresponding system is  
	\begin{eqnarray*}
		\dot{x}=x(1-dx),\qquad  \dot{y}=y(-3+dx+ey-kz),\qquad \dot{z}=z(1+kz).
	\end{eqnarray*}
The surfaces  $\l_{1}=1+kz=0$, $\l_{2}=1-dx=0$ and $\l_{3} =(1+kz)^{2}-\frac{1}{3}e y(1+\frac{1}{2}d x-\frac{3}{2}k z)=0$ are invariant algebraic surfaces with cofactors $kz$, $-dx$ and $ey+2kz$, respectively.\\
 By the change of coordinates $X= x l^{-1}_{2}$, $ Y= yl_{1}^{3}l_{2} l^{-1}_{3}$ and $Z=zl^{-1}_{1}$, we obtain $\dot{X}=X,\dot{Y}=-3Y,\dot{Z}=Z $.\\

\textbf{Case 26}: In this case the system is
	\begin{eqnarray*}
		\dot{x}=x(1+ax+by+2kz), \qquad \dot{y}=y(-3-ax+ey-2kz), \qquad \dot{z}=z(1+kz).
	\end{eqnarray*}
By using  $X=\frac{x}{1+kz} $, $Y=\frac{y}{1+kz} $ and $Z=\frac{z}{1+kz}$, the system transforms to 
\begin{eqnarray*}
	\dot{X}=X(1+aX+bY),\hspace{0.5cm} \dot{Y}=Y(-3-aX+eY), \hspace{0.5cm} \dot{Z}=Z.
\end{eqnarray*}
where the first two equations are linearizable saddle by Theorem \ref{colin}, part {\it i} and then the system is linearizable.\\

\textbf{Case 27}: The system takes the form
\begin{eqnarray*}
	\dot{x}=x(1+ax),\qquad \dot{y}=y(-3-2ax+ey-kz), \qquad \dot{z}=z(1+3ax+kz).
\end{eqnarray*}
The change of variable $Y=y l_{1} l_{2} l_{3} $ linearzies the second equation where $\l_{1} = (1+ax)^{2}+kz(1+\frac{1}{2} ax)=0$, $\l_{2} = 1+ax-\frac{1}{3}ey(1+kz)=0$ and $\l_{3}=1+ax=0 $ are invariant algebraic surfaces with respective cofactors $2ax+kz$, $ax+ey$ and  $ax$.
The other two equations are linearizable node and the change of coordinates $X=x(1+O(x,z))$ and $ Z=z(1+O(x,z)$) gives $\dot{X}=X,\dot{Z}=Z$.\\

\textbf{Case 28}: Under the conditions of this case, the system (\ref{2}) is written as  
	\begin{equation*}
		\dot{x}=x(1+ax+cz), \qquad \dot{y}=y(-3+2ax+ey-2cz), \quad \dot{z}=z(1+3ax+2cz).
	\end{equation*}
The system has two invariant algebraic surfaces  $\l_{1} = (1+ax)^{2}+2cz)=0$ and $\l_{2}=1+\frac{1}{3}ey(-1+ax+cz-(ax)^{2})=0$ with cofactors $2(ax+cz) $ and  $ey $, respectively.\\The second equation of the system is linearized by $Y=y l_{1} l_{2}^{-1} $ to $\dot{Y}=-3Y$ and the two other equations of the system  are linearizable node with the change of variables $X=x(1+O(x,z))$ and $ Z=z(1+O(x,z)$) so that $\dot{X}=X,\:\dot{Z}=Z$.\\

\textbf{Case 29}: The system reduces to
\begin{equation*}
	\dot{x}=x(1+ax+ey+cz), \qquad \dot{y}=y(-3+ax+ey-kz), \qquad \dot{z}=z(1+kz).
\end{equation*}
The system has an invariant algebraic curve $l=1+kz=0$ with cofactor $kz$. Then we find a
first integral $\Phi=x^{-1}yz^{4} l^{(\frac{c}{k}-3)} $ with an inverse Jacobi multiplier $x y^{2}z^{4}
l^{-1}$. The other independent first integral is $\Psi=y^{-1}z^{-3}(1+O(x,y,z))$.
Therefore the two first integrals
in desired forms are $\phi_{1}=\Phi^{-3}\Psi^{-4}$ and $\phi_{2}=\Psi^{-1}$.\\

\textbf{Case 30}:  The system takes the form
\begin{equation*}
	\dot{x}=x(1+ax+ey+cz), \qquad \dot{y}=y(-3+ax+ey-2kz), \quad \dot{z}=z(1+kz).
\end{equation*}
This case has an invariant algebraic curve $l=1+kz=0$ with cofactor $kz$. We find the first integral
$\Phi=x^{-1}yz^{4} l^{(\frac{c}{k}-2)} $ and the inverse Jacobi multiplier is $x y^{2}z^{4}$.
Theorem \ref{w} guarantees that the existence of
the other independent first integral is $\Psi=y^{-1}z^{-3}$. Thus, the two first integrals in desired form
are $\phi_{1}=\Phi^{-3}\Psi^{-4}$ and $\phi_{2}=\Psi^{-1}$. Since  $ \xi=z(1+kz)^{-1}$ satisfies $\dot{\xi}= \xi$, then the system must also be linearizable by Theorem \ref{w1}.\\
 
\textbf{Case 31}: The system becomes  
\begin{equation*}
	\dot{x}=x(1+ax-hy+3kz),\qquad  \dot{y}=y(-3-ax-kz),\qquad  \dot{z}=z(1+3ax+hy+kz).
\end{equation*}
We use nonstandard approach again as in Case 17. Assuming $a \simeq 0$ implies that
$d \simeq 0$ and $g \simeq 0$, then the system becomes
\begin{equation*}
	\dot{x} \simeq x(1-hy+3kz),\qquad  \dot{y} \simeq y(-3-kz),\qquad \dot{z} \simeq z(1+hy+kz).
\end{equation*}
The change of coordinates $(X,Y,Z)=(xy^{3},y,yz)$ transforms the system to 
\begin{equation*}
	\dot{X} \simeq X(-8-hY),\qquad  \dot{Y} \simeq -3 Y-kZ,\qquad  \dot{Z} \simeq Z(-2+hY).
\end{equation*}
The critical point at the origin of this system is in the Poincar\'e domain and hence is linearizable
via an analytic change of coordinates which can be chosen as $ (\hat{X},\hat{Y},\hat{Z}) \simeq (X(1+O(1)),Y(1+O(1)),Z(1+O(1)))$.
The two first integrals are  $\hat{\phi}=\hat{X}^{-3} \hat{Y}^2$ and $\hat{\psi}= \hat{Y}^2 \hat{X}^{-3}$, then pulling back these two first integrals, we see that the original system has two independent first integrals of required form $\phi_{1} =\hat{\phi}^{-1} =x^{3} y(1+\cdots)$ and $ \phi_{2}=\hat\psi^{-1}=y z^3 (1+\cdots)$.\\	
	
\textbf{Case 32}: 
	When $k=0$ we have two subcases

	i)  $c=0 , f=0 , b=h , d=-a $, 
	then the system takes the form 
	\begin{eqnarray*}
		\dot{x}=x(1+ax+hy),\qquad 
		\dot{y}=y(-3-ax+ey),\qquad
		\dot{z}=z(1+gx+hy).
	\end{eqnarray*}
	is the same proof as in Case 19.\\
ii) $d=0 ,a=0,f=0,h=b$
then the system in this case is  
		\begin{equation*}
		\dot{x}=x(1+by+cz),\quad \hspace{1cm} \dot{y}=y(-3+ ey),\quad \hspace{1cm} \dot{z}=z(1+gx+by)
	\end{equation*}
is the same proof of Case 7.\\ 
If $k \neq 0$ then the system takes 
	\begin{equation*}
	\dot{x}=x(1- dx+hy+cz),\qquad \dot{y}=y(-3+dx+ ey-kz),\qquad \dot{z}=z(1+\frac{(-2k+c)}{k} dx+hy+kz).
\end{equation*}
The substitution $ (X,y,Z)=(dx-kz, y, yz)$ the system reduces to
\begin{equation*}
	\dot{X}=X(1-X+hy), \qquad \dot{y}=y(-3+X+ ey), \qquad \dot{Z}=Z(-2+\frac{(c-k)}{k} dX+(e+h)y).
\end{equation*}
The first two equations give linearizable saddle by Theorem \ref{colin}, part {\it i} and clearly
$\dot{\hat{X}}=\hat{X}$ and $ \dot{Y}=-3 Y$.\\
The transformation $\hat{Z}= Z \exp(-\gamma(\hat{X},Y))$ will linearize the third equation.
It just remains to find $\gamma(\hat{X},Y)$ such that\\
\begin{equation}\label{36}
\dot{\gamma}= \frac{(c-k)}{k} dX (\hat{X},Y)+(e+h) y(\hat{X},Y).
\end{equation} 
To see this, let 
\begin{equation}\label{37}
\gamma(\hat{X},Y)= \sum\nolimits_{i+j>0} \alpha_{ij} \hat{X}^{i} Y^{j}.
\end{equation}
 Also assume $X(\hat{X},Y)=\sum\nolimits_{i+j>0} g_{ij} \hat{X}^{i} Y^{j}$ and $y(\hat{X},Y)=\sum\nolimits_{i+j>0} h_{ij} \hat{X}^{i} Y^{j}$. \\
Differentiating (\ref{37}) and equating with (\ref{36}), we get 
$$\sum\nolimits_{i+j>0} (i-3j)\alpha_{ij}X^{i} Y^{j}=\sum\nolimits_{i+j>0}(\frac{c-k}{k}) d g_{ij} \hat{X}^{i} Y^{j}+\sum\nolimits_{i+j>0} (e+h)h_{ij} \hat{X}^{i} Y^{j} ,$$
then we can choose  $\alpha_{ij}$=$\frac{(\frac{c-k}{k}) d g_{ij}+(e+h) h_{ij}}{i-3j}$ for $ i \neq 3j $ or otherwise $\alpha_{ij}=0$. Therefore, the system is linearizable.\\

\textbf{Case 33}: The system takes the form  
	\begin{eqnarray*}
		\dot{x}=x(1-gx+by+kz),\qquad \dot{y}=y(-3-gx+ey-kz),\qquad \dot{z}=z(1+gx+hy+kz).
	\end{eqnarray*}
The change of coordinates $ (X,Y,Z)=(xy,y,yz) $ reduce the system into
	\begin{eqnarray*}
		\dot{X}=X(-2+(b+e)Y),\qquad \dot{y}=Y(-3+eY)-gX-kZ,\qquad \dot{Z}=Z(-2+(e+h)Y)
	\end{eqnarray*}
The critical point at the origin of system is in the Poincar\'e domain and hence is linearizable by the analytic change of variables which can be chosen to be of the form\\ $ (\hat{X},\hat{Y},\hat{Z})=(X(1+O(1)),Y-gX-kZ+O(2),Z(1+O(1))) $.\\
The two first integrals $ \hat{\phi}=\hat{X}^{-3} \hat{Y}^{2} $ and  $ \hat{\psi}=\hat{Y}^{2} \hat{Z}^{-3}$ pull back these two first integrals, we see that the initial system has two independent analytic first integrals of desired form $\phi_{1} =\hat{\phi}^{-1} =x^{3} y(1+\cdots)$ and $ \phi_{2}=\hat\psi^{-1}=y z^3 (1+\cdots)$.\\

\textbf{Case 34}: The system (\ref{2}) can be written as   
	\begin{equation*}
		\dot{x}=x(1+ax+cz),\qquad  \dot{y}=y(-3+dx+fz),\qquad  \dot{z}=z(1+gx+kz).
	\end{equation*}
The first and third equations are linearizable node, with the change of coordinates $X=x(1+O(x,z))$ and $ Z=z(1+O(x,z)$) such that $\dot{X}=X,\dot{Z}=Z$. The second equation can be linearized by the transformation $Y=y e^{-\eta}$ if $\eta$ can be selected so that
\begin{equation*}
\dot{{\eta}}(X,Z)= d x (X,Z)+fz(X,Z).
\end{equation*}
Let $ \eta(x,Z)=\sum\nolimits_{i,j \ge 0} a_{ij} X^iZ^j$. Then 
$$\dot{{\eta}}=\sum\nolimits_{i,j \ge 0}(i+j) a_{ij} X^iZ^j=dx(X,Z)+fz(X,Z)=\sum\nolimits_{i,j \ge 0} b_{ij} X^iZ^j.$$Setting $a_{ij}=\frac{b_{ij}} {i+j}$  for $i+j > 0 $ which yields the convergence of $\eta$. \\

\textbf{Case 35}: When $a \neq 0$, then the system can be written as  
	\begin{equation*}
		\dot{x}=x(1+ax+(2k-\frac{gk}{a})z), \qquad \dot{y}=y(-3-2ax+ey-2kz), \qquad \dot{z}=z(1+gx+kz).
	\end{equation*}
We can see that the first and third equations of the system are lineaizable node and the change of variable  $Y=y l_{1}^3 l_{2}^{-1}$ will linearize the second equation, where $\l_{1} = 1+ax+kz=0 $ and $\l_{2} = 1+ax-\frac{1}{3}ey+kz=0 $.\\
When $a = 0$, we have two sub-cases.
\begin{enumerate}[i)]
\item $d=g=b=h=f+2k=0$, then the system becomes 
	\begin{equation*}
	\dot{x}=x(1+cz), \qquad \dot{y}=y(-3+ey-2kz), \qquad \dot{z}=z(1+kz).
\end{equation*}
	The transformation $(X,Y,Z)=(xl_{1}^{-\frac{c}{k}},y(l_{1}l_{2})^{-1},zl_{1}^{-1})$ will linearize the system where $l_{1}=1+kz$ and  $l_{2}=1-\frac{e}{3}y+kz$ .\\
\item $d=k=b=h=f=0$, then the system takes the form 
\begin{equation*}
	\dot{x}=x(1+cz), \qquad \dot{y}=y(-3+ey), \qquad \dot{z}=z(1+gx).
\end{equation*}
The system has an invariant algebraic curve $l=1-\frac{e}{3}y=0$ with cofactor $ey$. 
The second equation linearized by the substitution $Y= y l^{-1}$
 gives $\dot{Y}=-3Y$. The remained equations are in the variables $x$ and $z$ which give linearizable node and by the change of coordinates $X=x(1+O(x,z))$ and $Z=z(1+O(x,z))$ will linearize so that
$\dot{X}=X$, $\dot{Z}=Z$.
\end{enumerate}

\textbf{Case 36}: The system can be written as 
\begin{equation*}
	\dot{x}=x(1+ax+3kz), \qquad \dot{y}=y(-3-2ax+ey-2kz), \qquad \dot{z}=z(1+3ax+kz).
\end{equation*}
This case have invariant algebraic surfaces
  $\l_{1}=1-\frac{1}{3}ey+\frac{1}{3}eaxy+\frac{1}{3}ekyz-\frac{1}{3}ea^{2}yx^{2}+\frac{2}{3}kea xyz-\frac{1}{3}ek^{2}yz^{2}=0 $  and  $\l_{2} = 1+2ax2kz-2kaxz+(ax)^{2}+(kz)^{2}=0$ with cofactors $ey $ and $2(ax+kz)$, respectively.
The transformation $Y=y l_{1}^{-1} l_{2} $	 linearizes the second equation. Note also that the first and third equations of the system are lineaizable node.\\

\textbf{Case 37}: The integrability conditions in this case takes the form
 \begin{equation*}
	\dot{x}=x(1+ax-5hy+3kz), \qquad \dot{y}=y(-3-3hy-3kz), \qquad \dot{z}=z(1+hy+kz).
\end{equation*}
 The system has an invariant algebraic surface
	$ l=(1+hy+kz)^{2}+\frac{1}{2} ax(2-hy+kz)=0$ with cofactor $ax-6hy+2kz$, which gives two independent first integrals
	$\phi_{1}=x^{3}y l^{-3}$ and $\phi_{2}=yz^{3}$.\\
	
\textbf{Case 38}: The system has the form
	\begin{equation*}
		\dot{x}=x(1+ax-8hy+4kz),\qquad  \dot{y}=-3y(1+hy+kz),\qquad  \dot{z}=z(1+hy+kz).
	\end{equation*}
It is obvious we have a first integral $ \Phi= y z^{3} $.
We can also find an inverse Jacobi multiplier $x^{2} y^{1/3} z^{-2} l^{-2} $ which yields the second
first integral  $\Psi= x^{-1}  y^\frac{1}{3} z^{3} l^{-2}$.
We write the independent first integrals in required forms
$\phi_1=\Phi^{3} \Psi^{-3}=x^{3} y (1+\cdots)$  and $\phi_2=  y z^{3} $.\\
	
\textbf{Case 39}: The system becomes 
	\begin{equation*}
		\dot{x}=x(1+ax-8hy+4kz), \qquad \dot{y}=y(-3+hy+kz), \qquad \dot{z}=z(1-2ax+hy+kz).
	\end{equation*}
In this case, the system has an invariant algebraic surface
$\l = (1+hy+kz)^{3}+akxz(3+ax-6hy+2kz)=0$ with cofactor $ -9hy+3kz $ which lets us deduce a first integral  $\Phi=x^{2} y z l^{-2}$ and an inverse Jacobi multiplier $x^{2} y^{4/3} z l^{-2/3}$.
We can obtain a second integral $\Psi=x^{-1} y^{\frac{-1}{3}}(1+\cdots)$ by Theorem \ref{w}.
We write these first integrals in desired forms $\phi_1=\Psi^{-3} =x^{3} y (1+cdots)$  and
$\phi_2=(\Psi)^{6}\Phi^{3}=  y z^{3} (1+\cdots)$.\\

\textbf{Case 40}: The system reduces to
\begin{equation*}
	\dot{x}=x(1-gx-5hy+3kz), \qquad \dot{y}=y(-3+gx-3hy-3kz), \qquad \dot{z}=z(1+gx+hy+kz).
\end{equation*}
Invariant algebraic surfaces in this case are $l_{1}=(1+hy+kz)+gx( hy-kz)=0$ and
$l_{2}=(1-gx+hy+kz)^{2}+4ghxy=0$ with respective cofactors $-2(3hy-kz)$ and $-2(gx+3hy-kz)$
 yielding two independently first integrals as follows $\phi_1=x^{3} y l_{1}^{-2} l_{2}^{-1}$  and
$\phi_2= y z^{3} l_{1}^{-2} l_{2}^{2}$.\\

\textbf{Case 41}:	If $ e\neq 0$, then we have two subcases.
\begin{enumerate}[1.]
\item The integral conditions appears as
	 $$k(b+e-h)-ce=a(3b-2e)-de=k(2e-3h)+fe=a(b-e-h)+ge=0.$$
The system has an invariant algebraic surface  $l=1+ax-\frac{1}{3}ey+kz=0$ with cofactor
$ax+ey+kz$ and produces two first integrals  $\phi_1=x^{3} y l^{-1-\frac{3b}{e}}$  and
$\phi_2= y z^{3} l^{-1+\frac{3h}{e}}$.\\
	
\item With the conditions $a=g=-\frac{d}{3}, b=h=-\frac{e}{3} $ and $c=k=-\frac{f}{3}$, the system (\ref{2}) reduces to 
		\begin{equation*}
		\dot{x}=x(1-\frac{d}{3}x-\frac{e}{3}y-\frac{f}{3}z), \qquad \dot{y}=y(-3+dx+ey+fz), \qquad \dot{z}=z(1-\frac{d}{3}x-\frac{e}{3}y-\frac{f}{3}z)
	\end{equation*} 
Therefore, the system has two independent first integrals of the form 
$\phi_1=x^{3} y$  and $\phi_2 = y z^{3}$.\\
\end{enumerate}

We now consider the case when $e=0$. Again we have some subcases.
\begin{enumerate}[i.]
\item $h=k=0$.\\
	     	 If $b=0$, $a=0$ and $d=\frac{-g(f+3c)}{c}$, the system has exponential factor $E_1=\exp(-gx+cz)$ with cofactor $-gx+cz$, which yields two first integrals 
	  $\phi_1=x^{3} y E_1^{-3-\frac{f}{c}}$  and $\phi_2= y z^{3} E_1^{\frac{-f}{c}}$.\\
	  	  
	 	 If $b\neq 0$ , $a=0$ $f=0$ and $d=-3g$, then the system admits an exponential factor $E_2=\exp(3gx+by-3cz)$ with cofactor $3(gx-by-cz)$, and we construct two first integrals 
	  $\phi_1=x^{3} y E_2 $ and $\phi_2= y z^{3} $.\\
	  
\item If $h\neq 0$, $k=0$, $a=0$ ,$b= \frac{h(f+3c)}{f}$ and $d=- \frac{g(f+3c)}{c}$, we have an exponential factor $E_3=\exp(-fgx-hey+fcz)$ with cofactor $-fgx+3hey+fcz$, with two first integrals 
	  $\phi_1=x^{3} y E_3^{\frac{-(3c+f)}{fc}} $ and $\phi_2= y z^{3} E_3^{\frac{-1}{c}} $.\\
	  
\item If $k\neq 0$, $h=0$ $b=0$, $c=\frac{k(2a-g)}{a}$and $f=- \frac{k(3a-d-3g)}{a}$, the system has an algebraic surface $\l=1+ax+kz=0$ with cofactor $ax+kz$, with two first integrals 
	  $\phi_1=x^{3} y l^{-\frac{(3a+d)}{a}} $ and $\phi_2= y z^{3} l^{-\frac{(d+3g)}{a}} $.
\end{enumerate}

 \textbf{Case 41.1}: The system is of the form 
	\begin{equation*}
	\dot{x}=x(1+dx+ey), \qquad  \dot{y}=y(-3+dx+ey), \qquad  \dot{z}=z(1+gx+\frac{e g}{d}y).
\end{equation*}
We have two subcases depending on the parameter $d$.
\begin{enumerate}[1.]
\item  When $d\neq 0$, the change of variables $X=x  l^{-1}$, $Y=y  l^{-1}$ and $Z=z  l^{-\frac{g}{d}}$
where  $l=1+ d x -\frac{e}{3} y=0$ linearizes the system.

\item When $d= 0$, $e \neq 0$ and $g=0$, the change of variables $X=x  l^{-1}$, $Y=y  l^{-1}$ and
$Z=z  l^{-\frac{h}{e}}$ linearizes the system.\\
 Now, if $ e = 0 $ and $g \neq 0$, the change of variables $X=x$, $Y=y$ and $Z=z  E$
 where  $E=\exp(gx-\frac{h}{3}y)$ linearizes the system.\\
 \end{enumerate}
  \textbf{Case 41.2}: The system in this case has the form 
 \begin{equation*}
 	\dot{x}=x(1+gx+hy+kz), \qquad	\dot{y}=y(-3+gx+hy+kz), \qquad \dot{z}=z(1+gx+hy+kz).
 \end{equation*} 
 The system will be linearized by the change of coordinates  $ (X,Y,Z)=(x  l^{-1},y l^{-1},z l^{-1})$ where   $l$    is an invariant algebraic surface of the form $l=1+g x+h y+k z=0$ with cofactor
 $ g x+h y +k z $. \\
		\end{proof}
\section{Nonstandard results}

 This section motivated two different approaches to construct two independent first integrals.
 The first motivation is infinitely close to standard first integrals by using nonstandard
 perturbation theory of differential equations. The other motivation uses the unperturbation
 nonstandard transformation to  find infinitesimal, limited or unlimited first integrals.\\
 We study two types of system (\ref{ch}) depends on eigenvalues which are $\delta =(1+\varepsilon)$, $\varepsilon$ is infinitesimal, in these cases, either $\delta$ is appreciable or $\delta$ is itself infinitesimal.\\

Now, the change of variables
$ x_{1}=\delta^{-1} x$, $ y_{1}=\delta^{-1} y$, $ z_{1}=\delta^{-1} z$ and  $ \tau =\delta t$ transform the system (\ref{ch}) with respect to $\tau$ to 
\begin{equation}\label{38}
\begin{aligned}
	{x'_{1}}&=&x_{1}(1+ax_{1}+by_{1}+cz_{1}), \\
	{y'_{1}}&=&y_{1}(-3+dx_{1}+ey_{1}+fz_{1}), \\
	{z'_{1}}&=&z_{1}(1+gx_{1}+hy_{1}+kz_{1}),
	\end{aligned}
\end{equation}
where $'$ stands for $\frac{d}{d\tau}$ and in general the variables and parameters are hyperreals.
\subsection{Perturbation Case}
In this subsection, the study of the system with  $\delta =(1+\varepsilon)$ satisfies the perturbation technique to find first integrals and linearization of differential system, besides that, the  $\delta$ is infinitesimal type of system and it does not satisfy the perturbation technique.\\ 
Where $\delta =(1+\varepsilon)$  the system becomes 
 	\begin{eqnarray}\label{39}
 	F=\left\{\begin{array}{rc} 
 	\dot{x}&=F_{1}=x((1+\varepsilon)+ax+by+cz),\\ \dot{y}&=F_{2}=y(-3 (1+\varepsilon)+dx+ey+fz),\\
 	\dot{z}&=F_{3}=z((1+\varepsilon)+gx+hy+kz), \end{array}\right.
 		\end{eqnarray}
	where $F=(F_{1},F_{2},F_{3})$, where variables are near standard numbers and parameters are standard.\\
Let $\mathbb{R}^3$ be a standard topological space. A point $p=(x,y,z) \in \mathbb{R}^3$ is said to be infinitely close to a standard point $p_{0}=(x_{0},y_{0},z_{0}) \in \mathbb{R}^3$ which is denoted by $p\simeq p_{0} $, if $p$ is in any standard neighborhood of $p_{0}$. Let $D_{0}$ be a subset of $\mathbb{R}^3$. A point $p \in \mathbb{R}^3$  is said to be nearstandard in $D_{0}$ if there is a standard $p_{0}$ $\in$$D_{0}$ such that $p\simeq p_{0} $ . Let us denote by
\begin{align*} 
  {}^{N S}D_{0} =\{{p \in\mathbb{R}^3:\exists^{st}p_{0} \in D_{0}, ,p\simeq p_{0}}\},
\end{align*}
 the external-set of nearstandard points in $D_{0}$.\\
The reformulation of Theorem \ref{int} to standard case by using the definition of standard with $\varepsilon=0$ and its nonstandard formulation proof is the same as the classical proof.

 \begin{proposition}\label{I}
Let $ F : D \to \mathbb{R}^3$ and $ F_{0} : D_{0} \to \mathbb{R}^3 $ be vector fields, $p=(x,y,z)\in D$
and $p_{0}=(x_{0},y_{0},z_{0})\in D_{0}$ and  $p_{0}$ is standard. Then vector field $F$ is
perturbation to the vector field $F_{0}$ and system (\ref{39}) has the same integrability
conditions of system (\ref{2}), if  $p\simeq p_{0}$ and $F\simeq F_{0} $, for topology of
uniform convergence on compacta.
 \end{proposition}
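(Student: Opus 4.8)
The plan is to treat the two assertions of the proposition separately, since each follows from a result already at hand.

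\emph{The perturbation claim.} First I would note that $\mathbb{R}^3$ with the Euclidean metric is locally compact: every closed ball is closed and bounded, hence compact by the Heine--Borel Theorem (Theorem \ref{v}). Since $F_0$ is standard, $D_0$ is a standard open subset of $\mathbb{R}^3$, so ${}^{NS}D_0\subset D$ and both $F$ and $F_0$ are defined at every near-standard point of $D_0$; this is exactly the domain requirement in the definition of a perturbation from \cite{1}. The remaining condition, $F\simeq F_0$ for the topology of uniform convergence on compacta, is precisely the hypothesis, so Proposition \ref{sari} applies verbatim and gives that $F$ is a perturbation of $F_0$. (If one prefers to check the hypothesis rather than assume it: on any compact $K\subset D_0$ one has $\sup_{p\in K}\|F(p)-F_0(p)\|=|\varepsilon|\,\sup_{p\in K}\|(x,-3y,z)\|$, which is infinitesimal because $K$ is bounded and $\varepsilon$ is infinitesimal.)

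\emph{The coincidence of integrability conditions.} Here the key device is the rescaling already recorded in \eqref{38}. Because $\delta=1+\varepsilon$ is limited and bounded away from zero, $\delta^{-1}$ is appreciable and $\delta^{-1}\simeq 1$, so $x_1=\delta^{-1}x$, $y_1=\delta^{-1}y$, $z_1=\delta^{-1}z$ together with $\tau=\delta t$ is an analytic, invertible, near-identity change of coordinates and time on a neighbourhood of the origin. A direct computation turns \eqref{39} into \eqref{38}, which is literally system \eqref{2}: the linear part becomes $\mathrm{diag}(1,-3,1)$ and the nonlinear coefficients $a,b,c,d,e,f,g,h,k$ are carried over unchanged. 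Since integrability at the origin (the existence of two functionally independent analytic first integrals near $0$) is invariant under such a change of coordinates and time reparametrization, system \eqref{39} is integrable if and only if its rescaled form \eqref{38} is, i.e.\ if and only if the conditions listed in Theorem \ref{int} hold; these are the same conditions as for system \eqref{2}.

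I would also make the coincidence structural, not merely a by-product of the rescaling: a monomial $x^iy^jz^k$ is resonant for the $x$-component with respect to the eigenvalues $(1+\varepsilon,-3(1+\varepsilon),1+\varepsilon)$ if and only if $(1+\varepsilon)(i-3j+k)=1+\varepsilon$, i.e.\ $i-3j+k=1$, which is $\varepsilon$-independent, and similarly for the other two components; hence the resonance set $\mathcal{M}_\alpha$ and its rank $R_\alpha$ agree with those of $(1,-3,1)$ (cf.\ Theorem \ref{ThZhang}). Therefore the recursion producing the obstructions $\mathcal{R}_\phi,\mathcal{R}_\psi$ to the first integrals \eqref{phi}, \eqref{psi} ranges over the same resonant monomials and yields the same polynomials in the coefficients, so the ideal $\mathcal{B}$, its variety $V(\mathcal{B})$, and its irreducible decomposition are unchanged.

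The main obstacle is the hypothesis that $\delta$ be bounded away from zero: it is exactly this that makes $\delta^{-1}$ appreciable, keeps the rescaled variables near-standard, and makes \eqref{38} a genuine analytic model near the origin. Without it — e.g.\ if $\delta$ is itself infinitesimal — the reduction to \eqref{2} fails, which is why that case must be handled separately and yields genuinely different conclusions. Apart from this, the argument is routine: one checks only the rescaling identity \eqref{39}$\to$\eqref{38} and notes that, with $p\simeq p_0$, the perturbation relation $F\simeq F_0$ places the orbits of $F$ in the monad of those of $F_0$, so the first integrals transported back through the near-identity rescaling remain near-standard and functionally independent.
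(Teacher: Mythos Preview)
Your proposal is correct and follows essentially the same route as the paper: Heine--Borel for local compactness of $\mathbb{R}^3$, the componentwise computation $F(p)-F_0(p)=\varepsilon\,(x,-3y,z)$ to establish infinitesimal closeness, and the rescaling \eqref{38} to identify system \eqref{39} with system \eqref{2} and transfer the integrability conditions of Theorem~\ref{int}. The only organisational difference is that the paper verifies the definition of perturbation directly (pointwise $F(p)\simeq F_0(p)$ at near-standard $p$), whereas you first invoke Proposition~\ref{sari} and then supply the same computation parenthetically; your additional structural remark on the $\varepsilon$-independence of the resonance lattice is a nice observation but is not in the paper's proof.
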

 \begin{proof} 
According to Heine Borel's theorem, $\mathbb{R}^{3}$ is locally compact (any compact set is locally
compact), we assume $p=(x,y,z) \in {}^{NS}D_{0}$, then there exists a standard $p_{0}=(x_{0},y_{0},z_{0}) \in D_{0}$ such that $p \simeq p_{0} $.\\
To prove $F$ is infinitely close to $F_0$, that means
\begin{align*} 
 F\simeq F_{0}.
\end{align*}
Since $\mathbb{R}^{3}$ is locally compact, there exists a standard compact neighborhood $ K $ of $p_{0}$ such that $p\simeq p_{0}$.
Hence $ p \in K \subset D$  and $p=(x,y,z)$ belongs to standard set $D_{0}$.\\
 Now, $F_{1}(p)=x (1+\varepsilon+a x+b y+c z)$ and $F_{01}(p)=x (1+a x+b y+c z)$, 
so 
\[
F_{1}(p)-F_{01}(p)=x(1+\varepsilon+a x+b y+c z)-x(1+a x+b y+c z)=x \varepsilon.
\]
 Then
\[
 F_{1}(p) \simeq  F_{01}(p).
\]
 We repeat the processes for the two other $ F_{2}(p) \simeq F_{02}(p)$ and $ F_{3}(p) \simeq F_{03}(p)$.\\
Thus, $F$ is infinitely close to $F_{0}$, then $F$ is perturbation to $F_{0}$.\\
Using the same transformation as in the system (\ref{38}) into the system (\ref{39}), we get the same integrability conditions as in Theorem \ref{int}. \\
 \end{proof}
 \begin{proposition}\label{ss}
 Let a vector field $ F : D \to \mathbb{R}^3$ is a perturbation of a standard vector field $ F_{0} : D_{0} \to \mathbb{R}^3$ and system {(\ref{39})} has two first integrals of the form $H_{1}(x,y,z)=x^{3} y (1+O(x,y,z))$ and $H_{2}(x,y,z)=y z^{3} (1+O(x,y,z))$ where $(x,y,z) \in {}^{NS}D_{0}$, then  $H_{1}(x,y,z) \simeq \phi_{1}(x,y,z)$ and  $H_{2}(x,y,z) \simeq \phi_{2}(x,y,z)$.\\
 \end{proposition}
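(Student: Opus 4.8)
The plan is to reduce the perturbed system \eqref{39} to the standard system \eqref{2} by means of the rescaling already used in \eqref{38}, and then to transport the infinitesimal closeness through the resulting pull-back, using the fact that a standard analytic function sends nearstandard arguments with a given shadow to values with the corresponding shadow. By Proposition \ref{I}, $F$ is a perturbation of $F_0$ and \eqref{39} satisfies exactly the integrability conditions of \eqref{2}; hence the standard system \eqref{2} admits the two functionally independent first integrals $\phi_1 = x^3y(1+O(x,y,z))$ and $\phi_2 = yz^3(1+O(x,y,z))$ built case by case in the proof of Theorem \ref{int}, and the same Darboux-type construction (invariant algebraic surfaces, exponential factors, inverse Jacobi multipliers and normalizing changes of coordinates) applied to \eqref{39} yields the first integrals $H_1,H_2$ of the prescribed form. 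Applying $x_1=\delta^{-1}x$, $y_1=\delta^{-1}y$, $z_1=\delta^{-1}z$, $\tau=\delta t$ with $\delta=1+\varepsilon$ turns \eqref{39}, by \eqref{38}, into a system in $(x_1,y_1,z_1)$ which is identical to \eqref{2}; consequently, up to the normalizing factor $\delta^4\simeq1$ forced by the leading monomial,
\[
H_1(x,y,z)=\delta^4\,\phi_1(\delta^{-1}x,\delta^{-1}y,\delta^{-1}z),\qquad H_2(x,y,z)=\delta^4\,\phi_2(\delta^{-1}x,\delta^{-1}y,\delta^{-1}z).
\]

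Now fix a nearstandard point $p=(x,y,z)\in{}^{NS}D_0$ with standard shadow $p_0=(x_0,y_0,z_0)$. Since $\varepsilon\simeq0$, the number $\delta$ is appreciable, so $\delta^{-1}x\simeq x\simeq x_0$ and likewise for the other coordinates, whence $(\delta^{-1}x,\delta^{-1}y,\delta^{-1}z)$ is again nearstandard with shadow $p_0$. Because $\phi_1$ is a standard analytic function defined at the standard point $p_0$, it maps arguments infinitely close to $p_0$ to values infinitely close to $\phi_1(p_0)$; hence both $\phi_1(\delta^{-1}x,\delta^{-1}y,\delta^{-1}z)$ and $\phi_1(x,y,z)$ lie in the monad of the limited number $\phi_1(p_0)$, and multiplying the former by $\delta^4\simeq1$ keeps it there, so $H_1(x,y,z)\simeq\phi_1(x,y,z)$; the same argument gives $H_2(x,y,z)\simeq\phi_2(x,y,z)$. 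One may also avoid the explicit formulas: by Proposition \ref{sari} and the Heine--Borel Theorem \ref{v}, which makes $\mathbb{R}^3$ locally compact, $F\simeq F_0$ for the topology of uniform convergence on compacta, and every step of the construction of the first integrals --- solving the linear homological equations and the first-order linear ODEs for the cofactors, forming products of Darboux factors and the inverse Jacobi multiplier --- is continuous for that topology, so $H_i$ and $\phi_i$ remain infinitely close on ${}^{NS}D_0$.

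The point that needs care, and which I regard as the main obstacle, is that a first integral of the form $x^3y(1+O(x,y,z))$ is not literally unique: it may be multiplied by any first integral of the form $1+O(x,y,z)$, for instance by $1+c\,yz^3$ with $c$ a nonzero standard constant, which is not infinitely close to $\phi_1$ at a nearstandard point where $yz^3\not\simeq0$. Thus the assertion $H_i\simeq\phi_i$ is meaningful only when $H_i$ denotes the first integral produced by the construction of Theorem \ref{int}, not an arbitrary one. I would therefore run the comparison through that construction step by step: at each stage the object attached to \eqref{39} and the one attached to \eqref{2} solve the same equation whose data differ only by terms of order $\varepsilon$, so continuous dependence on parameters forces their difference to be infinitesimal uniformly on every standard compact subset of $D_0$, hence at every point of ${}^{NS}D_0$, which is exactly the claim.
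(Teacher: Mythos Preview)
Your argument is correct and in fact more careful than the paper's own proof. The paper proceeds by a direct nearstandard expansion: it writes $p=(x_0+\varepsilon_1,y_0+\varepsilon_2,z_0+\varepsilon_3)$ with $\varepsilon_i$ infinitesimal, computes $H_1(p)-\phi_1(p_0)$, and asserts that the two $(1+O(\cdot))$ factors differ only by an infinitesimal $\tau_1$, so the difference collapses to $x_0^3y_0\,\tau_1\simeq0$. That is essentially your continuity observation for standard analytic functions, but the paper never justifies why the higher-order factor of $H_1$ agrees with that of $\phi_1$ up to an infinitesimal; it simply takes this for granted. Your route through the rescaling \eqref{38} supplies the missing link explicitly, producing the concrete identity $H_1(x,y,z)=\delta^{4}\phi_1(\delta^{-1}x,\delta^{-1}y,\delta^{-1}z)$ and then invoking $\delta\simeq1$ together with continuity of the standard $\phi_1$. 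This buys you a transparent, pointwise comparison with no hidden hypotheses.

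Your third paragraph pinpoints a genuine ambiguity that the paper glosses over: a first integral of the shape $x^3y(1+O(x,y,z))$ is determined only up to multiplication by any analytic first integral of the form $1+O(x,y,z)$, so the conclusion $H_i\simeq\phi_i$ can fail for an arbitrary choice of $H_i$. Your resolution --- take $H_i$ to be the specific integral produced either by the rescaling or, equivalently, by rerunning the Darboux construction of Theorem~\ref{int} with the perturbed data --- is exactly the reading under which the proposition is true, and the continuity of each step in that construction (cofactor equations, linear ODEs, products of Darboux factors, inverse Jacobi multipliers) is the right way to make the comparison uniform on standard compacta. The paper's argument implicitly assumes this same interpretation but does not say so.
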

 \begin{proof} By Proposition \ref{sari}, the vector field $F=(F_{1},F_{2},F_{3}) $ in {(\ref{39})} is a perturbation to the standard vector field $F_{0}=(F_{01},F_{02},F_{03}) $ in system (\ref{2}).

Since $(x,y,z) \in {}^{NS}D_{0}$, we can set $x=x_{0}+\varepsilon_{1} $, $y=y_{0}+\varepsilon_{2} $ and $z=z_{0}+\varepsilon_{3} $, where $\varepsilon_{1},\varepsilon_{2},\varepsilon_{3}$ are infinitesimal and $(x_{0},y_{0},z_{0}) \in D_{0}$.\\
\begin{align}\label{H}
  H_{1}(x_{0}+\varepsilon_{1},y_{0}+\varepsilon_{2},z_{0}+\varepsilon_{3})-\phi_{1}(x_{0},y_{0},z_{0}) 
   &=(x_{0}+\varepsilon_{1})^{3} (y_{0}+\varepsilon_{2}) (1+O(x_{0}+\varepsilon_{1},y_{0}+\varepsilon_{2},z_{0}\nonumber \\
   &+\varepsilon_{3}))-x_{0}^{3} y_{0} (1+O(x_{0},y_{0},z_{0})).
  \end{align}
 Performing some calculations, we can simplify equation \eqref{H} to 
   $$x_{0}^{3} y_{0} (1+O(x_{0},y_{0},z_{0})+\tau_{1})-x_{0}^{3} y_{0} (1+O(x_{0},y_{0},z_{0}))=x_{0}^{3} y_{0}\tau_{1},$$
  where $\tau_{1}$ is infinitesimal 
 and this implies
$ H_{1} \simeq  \phi_{1}$ for all near standard point of $D_{0}$\\
In similar way we can obtain  $ H_{2} \simeq \phi_{2}$.\\
 \end{proof}
 \begin{proposition}
 If the vector field  $ E_{1}: D \to \mathbb{R}^3$ is a perturbation of a standard vector field $ E_{0}: D_{0} \to \mathbb{R}^3$, with $p \in {}^{NS}D_{0}$ and $p_{0} \in D_{0}$, then the linearizablity of the two vector fields (\ref{39}) and (\ref{2})  are $E_{1}\simeq E_{0}$ and $ p \simeq p_{0} $, where 
 
 \begin{equation*}\label{40}
 E_{1}=\left\{\begin{array}{rcc} 
 \dot{x}&=E_{11}=(1+\varepsilon)x,\\ \dot{y}&=E_{12}=-3(1+\varepsilon)y),\\
 \dot{z}&=E_{13}=(1+\varepsilon)z, \end{array}\right.
 \end{equation*} 
 and
  \begin{equation*}
 E_{0}=\left\{\begin{array}{rccc} 
 \dot{x}&=E_{01}=x,\\ \dot{y}&=E_{02}=-3y,\\
 \dot{z}&=E_{03}=z, \end{array}\right.
 \end{equation*}
 where $E_{1}=({E_{11},E_{12},E_{13}})$ and $E_{0}=({E_{01},E_{02},E_{03}})$
 \end{proposition}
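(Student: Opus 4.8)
The plan is to imitate the arguments of Propositions \ref{I} and \ref{ss}, reducing the statement to a pointwise comparison of the two linear vector fields on near-standard points together with an appeal to Proposition \ref{sari}. First I would invoke the Heine--Borel Theorem (Theorem \ref{v}) to record that $\mathbb{R}^3$ is locally compact. Given $p=(x,y,z)\in {}^{NS}D_0$, there is a standard $p_0=(x_0,y_0,z_0)\in D_0$ with $p\simeq p_0$; choosing a standard compact neighbourhood $K$ of $p_0$ gives $p\in K\subset D$, and in particular the coordinates $x,y,z$ are limited, being infinitely close to the standard reals $x_0,y_0,z_0$.

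Next I would carry out the componentwise estimate. Since $E_{11}(p)=(1+\varepsilon)x$ and $E_{01}(p)=x$, we have $E_{11}(p)-E_{01}(p)=\varepsilon x$, which is infinitesimal because $\varepsilon$ is infinitesimal and $x$ is limited; hence $E_{11}(p)\simeq E_{01}(p)$. The same computation gives $E_{12}(p)-E_{02}(p)=-3\varepsilon y\simeq 0$ and $E_{13}(p)-E_{03}(p)=\varepsilon z\simeq 0$, so $E_1(p)\simeq E_0(p)$ for every $p\in {}^{NS}D_0$. By Proposition \ref{sari} this is precisely the statement that $E_1$ is a perturbation of the standard vector field $E_0$ for the topology of uniform convergence on compacta, that is, $E_1\simeq E_0$, and along the way $p\simeq p_0$.

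For the linearizability part I would argue that whenever the standard system \eqref{2} is linearizable (under the relevant conditions of Theorem \ref{int}), there is a near-identity analytic change of coordinates $X=x(1+O(x,y,z))$, $Y=y(1+O(x,y,z))$, $Z=z(1+O(x,y,z))$ conjugating it to $E_0$. The very same construction --- the explicit invariant algebraic surfaces, exponential factors, and convergent series used in the case analysis --- applied with $\delta=1+\varepsilon$ in place of $\delta=1$ produces a change of coordinates conjugating \eqref{39} to $E_1$; this transformation differs from the standard one only through the limited/infinitesimal substitution $\delta=1\mapsto 1+\varepsilon$ in its standard analytic coefficients. Hence the image $p$ of a near-standard point under the linearizing map of \eqref{39} is infinitely close to the image $p_0$ under the linearizing map of \eqref{2}, giving $p\simeq p_0$, together with $E_1\simeq E_0$ established above.

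The step I expect to be the main obstacle is this last one: making rigorous that the linearizing coordinate change for the perturbed system is itself a perturbation of the one for the unperturbed system. This does not follow formally from $E_1\simeq E_0$ alone; one has to invoke the stability of the Poincar\'e--Dulac normalization under perturbation of the vector field --- applying Proposition \ref{sari} once more to the conjugating maps, which depend analytically and standardly on the parameters --- or, more concretely, the observation that each normalizing series exhibited in the proof of Theorem \ref{int} has coefficients that are standard analytic functions of the parameters and of $\delta$, so that specializing $\delta=1+\varepsilon$ perturbs them infinitesimally on every compact set.
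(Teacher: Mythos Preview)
Your core argument is correct and matches the paper's approach: the componentwise estimate $E_{1i}(p)-E_{0i}(p)=\varepsilon\cdot(\text{limited})\simeq 0$ together with Proposition~\ref{sari} is exactly how the paper proceeds. The paper's proof is in fact much terser than yours: it simply cites the preceding proposition (Proposition~\ref{ss}, which itself rests on Proposition~\ref{sari}) to assert that $E_1$ is a perturbation of $E_0$, writes $p=(x_0+\varepsilon_1,y_0+\varepsilon_2,z_0+\varepsilon_3)$ with infinitesimal $\varepsilon_i$, and then declares ``directly from the vector fields $E_1$ and $E_0$'' that $p\simeq p_0$ and $E_1\simeq E_0$. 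Your explicit invocation of Heine--Borel and the three componentwise computations fill in details the paper leaves implicit.

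Where you diverge is in your last two paragraphs. You interpret the proposition as requiring that the \emph{linearizing transformations} for \eqref{39} and \eqref{2} be perturbations of one another, and you correctly flag this as the hard step. The paper, however, does not attempt anything of the sort: its proof simply compares the two \emph{target} linear systems $E_1$ and $E_0$ at near-standard points and stops there. So the ``main obstacle'' you identify is one the paper never confronts---it reads the proposition as a statement purely about the linear vector fields $E_1,E_0$, not about the conjugating maps. Your extra analysis is more ambitious than what is asked (or proved) in the paper; for the purposes of matching the paper's argument you can drop it.
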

\begin{proof}:
	Using Proposition \ref{ss}, the vector field $E$ is a perturbation to the standard vector field  $E_{0}$. If we let $p=(x_{0}+\varepsilon_{1},y_{0}+\varepsilon_{2} ,z_{0}+\varepsilon_{3}) $ and
	$p_{0}=(x_{0},y_{0},z_{0})$ where $\varepsilon_{1},\varepsilon_{2}$ and $\varepsilon_{3}$
are infinitesimals, then directly from the vector fields $E_{1}$ and $E_{0}$, we conclude that
$p \simeq p_{0}$ and  $E_{1}\simeq E_{0}.$ 
\end{proof}
 \begin{remark} In system \eqref{ch}, when $\delta$ is infinitesimal, then the Proposition \ref{sari} for perturbation does not work because the difference of the two vector fields is appreciable not infinitesimal.
\end{remark}
  \subsection{Unperturbation Case}
  In this section we analyze the two independent first integrals in the perspective view of
  nonstandard analysis. 
   \begin{proposition}\label{A}
   If the conditions of Theorem \ref{int} are satisfied, then the  system \eqref{ch} with $ \delta=1+\varepsilon $ has two independent first integrals of the form $\psi_{1}= {(1+\varepsilon)}^{-4}x^{3} y (1+O({(1+\varepsilon)}^{-1}x$,\\${(1+\varepsilon)}^{-1}y,{(1+\varepsilon)}^{-1}z))$ and $\psi_{2}={(1+\varepsilon)}^{-4} y z^{3} (1+O({(1+\varepsilon)}^{-1}x,{(1+\varepsilon)}^{-1}y,{(1+\varepsilon)}^{-1}z))$ where $(x,y,z) \in \textbf{R}^3$ where parameters are standard, and the two first integrals are limited or unlimited.
   \end{proposition}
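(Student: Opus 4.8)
The plan is to reduce system \eqref{ch} with $\delta=1+\varepsilon$ to the standard system \eqref{2} by the explicit rescaling recorded just before \eqref{38}, and then to pull back the standard first integrals. First I would apply the change of variables $x_{1}=\delta^{-1}x$, $y_{1}=\delta^{-1}y$, $z_{1}=\delta^{-1}z$ together with the time change $\tau=\delta t$, where $\delta=1+\varepsilon$. Since $\varepsilon$ is infinitesimal, $\delta\in\mathrm{monad}(1)$ is appreciable, hence bounded away from $0$; so this is a genuine invertible change of coordinates (with appreciable Jacobian), and $\tau=\delta t$ is a constant time rescaling, which leaves the orbits of the vector field unchanged. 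Carrying out the substitution turns \eqref{ch} into precisely system \eqref{38}, which is formally identical to \eqref{2}.

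Next, by hypothesis the conditions of Theorem \ref{int} are satisfied, so \eqref{2} — equivalently \eqref{38} — is integrable at the origin; by the notion of integrability recalled after \eqref{cch}, it admits two functionally independent analytic first integrals
\[
\phi_{1}(x_{1},y_{1},z_{1})=x_{1}^{3}y_{1}\bigl(1+O(x_{1},y_{1},z_{1})\bigr),\qquad
\phi_{2}(x_{1},y_{1},z_{1})=y_{1}z_{1}^{3}\bigl(1+O(x_{1},y_{1},z_{1})\bigr).
\]
Because a constant time rescaling together with a linear change of coordinates preserves the property of being a first integral, the pull-backs $\psi_{i}(x,y,z):=\phi_{i}(\delta^{-1}x,\delta^{-1}y,\delta^{-1}z)$ are first integrals of \eqref{ch}, and they remain functionally independent since $(x,y,z)\mapsto(\delta^{-1}x,\delta^{-1}y,\delta^{-1}z)$ is a diffeomorphism ($\delta\neq0$); moreover, because $\delta$ is appreciable, the domain of convergence of the reparametrized series is comparable to the original one. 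Substituting $\delta^{-1}=(1+\varepsilon)^{-1}$ and pulling out the scalar factor gives
\[
\psi_{1}=(1+\varepsilon)^{-4}x^{3}y\bigl(1+O((1+\varepsilon)^{-1}x,(1+\varepsilon)^{-1}y,(1+\varepsilon)^{-1}z)\bigr),
\]
and likewise $\psi_{2}=(1+\varepsilon)^{-4}yz^{3}\bigl(1+O((1+\varepsilon)^{-1}x,(1+\varepsilon)^{-1}y,(1+\varepsilon)^{-1}z)\bigr)$, which is the asserted form.

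Finally, for the magnitude claim I would note that $1+\varepsilon\in\mathrm{monad}(1)$ is appreciable, hence so is $(1+\varepsilon)^{-4}$; therefore at any point $(x,y,z)\in\mathbf{R}^{3}$ the value $\psi_{i}(x,y,z)$ is an appreciable multiple of a standard-type expression, so it is limited when the point is near-standard and may be unlimited when one of the coordinates is unlimited — in every case $\psi_{i}$ is limited or unlimited, as claimed. The only delicate point, and the one I would check most carefully, is the legitimacy of the reduction itself: that the constant time change $\tau=\delta t$ transports the first-integral property back to the $t$-variable and that division by $\delta$ is admissible. Both hold precisely because $\delta$ is appreciable, which is exactly the role of the hypothesis $\varepsilon\simeq0$, and this is also what separates the present proposition from the case in which $\delta$ is itself infinitesimal.
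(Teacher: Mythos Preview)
Your reduction is exactly the paper's: apply the rescaling $x_1=\delta^{-1}x$, $y_1=\delta^{-1}y$, $z_1=\delta^{-1}z$, $\tau=\delta t$ to land on \eqref{38}, invoke Theorem~\ref{int} to get $\phi_1=x_1^3y_1(1+O(\cdot))$ and $\phi_2=y_1z_1^3(1+O(\cdot))$, and pull back. So the construction of $\psi_1,\psi_2$ and their asserted form match the paper's argument.

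Where you diverge is in the treatment of the phrase ``limited or unlimited''. You read it literally (every hyperreal is one or the other) and dispose of it in a line. The paper, however, interprets this as a classification and devotes the bulk of its proof to an explicit case analysis: it splits according to whether $x,y,z$ lie in $\mathrm{gal}(0)$, $\mathrm{m}(0)$, or $\mathrm{m}(\omega)$ and, in each scenario, argues that $\psi_1,\psi_2$ are appreciable, infinitesimal, or unlimited (for instance, when $y\in\mathrm{m}(\omega)$ it chooses $x=\alpha/\sqrt[3]{y_1}$, $x=\alpha/\sqrt[3]{y_1^2}$, or $x=\alpha/\sqrt[6]{y_1}$ to land in each regime). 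Your one-line observation is formally sufficient for the statement as written, and arguably cleaner; the paper's version buys a finer description of \emph{when} each magnitude regime occurs, which is the content later Propositions~\ref{B}--\ref{D} build on.
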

 \begin{proof}
 The transformation of  \eqref{38}, reduces the system {\eqref{39}} with $ \delta=1+\varepsilon $  to
 	 \begin{eqnarray*}
 	 	{x'_{1}}&=&x_{1}(1+ax_{1}+by_{1}+cz_{1}),\\
 	 	{y'_{1}}&=&y_{1}(-3+dx_{1}+ey_{1}+fz_{1}),\\
 	 	{z'_{1}}&=&z_{1}(1+gx_{1}+hy_{1}+kz_{1}).
 	 \end{eqnarray*}  
 Using the arguments similar in Theorem \ref{int}, the system has two independent first integrals of
 the form  $\phi_{1}=x_{1}^{3} y_{1} (1+O(x_{1},y_{1},z_{1}))=x_{1}^{3} y_{1} p_{1}$ and
 $\phi_{2}= y_{1} z_{1}^{3} (1+O(x_{1},y_{1},z_{1}))=y_{1} z_{1}^{3} p_{2}$.\\
 then the first integrals for the original system are 
 $$\psi_{1}={(1+\varepsilon)}^{-4} x^{3}y(1+O({(1+\varepsilon)}^{-1}x,{(1+\varepsilon)}^{-1}y,{(1+\varepsilon)}^{-1}z))={(1+\varepsilon)}^{-4}x^{3} y p_{1}$$
 and 
 $$\psi_{2}={(1+\varepsilon)}^{-4} y z^{3} (1+O({(1+\varepsilon)}^{-1}x,{(1+\varepsilon)}^{-1}y,{(1+\varepsilon)}^{-1}z))={(1+\varepsilon)}^{-4}y z^{3} p_{2}$$
 Since the variables are nonstandard and parameters are standard, then we have different
 types of first integrals.
  \begin{enumerate}[I.]
  \item The two first integrals $\psi_{1},\psi_{2}$ are appreciable .\\ 
  \textbf{Case 1:} If $x,y,z \in  \mbox{gal}(0)$, then the two first integrals are obtained directly from the substitution of   $x,y,z$ in  $\psi_{1}$ and $\psi_{2}$.\\\\
 \textbf{Case 2:} If $y \in \mbox{m}(\omega) $, wher $\omega$ is unlimited and $(1+O({(1+\varepsilon)}^{-1}x,{(1+\varepsilon)}^{-1}y,{(1+\varepsilon)}^{-1}z))$ contain $y$, then by
assuming $y (1+O({(1+\varepsilon)}^{-1}x,{(1+\varepsilon)}^{-1}y,{(1+\varepsilon)}^{-1}z))=y_{1}  $ and since the product of two unlimited is unlimited therefore $y_{1} $ is unlimited. Let $x=\frac{\alpha}{\sqrt[3]{y_{1}}}$ and $z=\frac{\beta}{\sqrt[3]{y_{2}}}$ where $\alpha, \beta \in \mbox{gal}(0)$. Then we deduce that the  two first integrals are appreciable.\\
   
 \item The two first integrals $\psi_{1}, \psi_{2}$ are infinitesimal\\
          \textbf{Case 1:} If $x,y,z \in \mbox{m}(0)$, then the infinitesimal of the two first integrals are directly obtained by the substitution of variables in $\psi_{1}$, $\psi_{2} $.\\\\
          \textbf{Case 2:} If $y \in \mbox{m}(0)$ and  $x,z \in \mbox{gal}(0)$, the multiplication of $x^{3}$ and $ (1+O({(1+\varepsilon)}^{-1}x,\\{(1+\varepsilon)}^{-1}y,{(1+\varepsilon)}^{-1}z))$ stay limited. When we multiply the results by $y$, we see that the two first integrals are infinitesimal. \\\\
     \textbf{Case 3:} If $y \in \mbox{m}(\omega)$ and $(1+O(x,y,z))$ contain $y$, then
suppose $y (1+O({(1+\varepsilon)}^{-1}x,\\{(1+\varepsilon)}^{-1}y,{(1+\varepsilon)}^{-1}z))=y_{1} $ and the product of two unlimited is unlimited, hence
 $y_{1} $ is unlimited. If we take $x=\frac{\alpha}{\sqrt[3]{y_{1}^{2}}}$ and $z=\frac{\beta}
 {\sqrt[3]{y_{2}^{2}}}$, where $\alpha, \beta \in \mbox{gal}(0)$, then the two first integrals are infinitesimal.\\
 
 \item The two first integrals $\psi_{1},\psi_{2}$ are unlimited\\
      If $ y \in \mbox{m}(\omega) $ and $(1+O({(1+\varepsilon)}^{-1}x,{(1+\varepsilon)}^{-1}y,{(1+\varepsilon)}^{-1}z))$ contain $y$, then suppose $y (1+O({(1+\varepsilon)}^{-1}x,{(1+\varepsilon)}^{-1}y,{(1+\varepsilon)}^{-1}z))=y_{1}$ and the product of two unlimited is unlimited, therefore $y_{1} $ is unlimited. If we take $x=\frac{\alpha}{\sqrt[6]{y_{1}}}$ and $z=\frac{\beta}{\sqrt[6]{y_{2}}}$ where  $\alpha, \beta \in \mbox{gal}(0)$, thus the results stay unlimited.\\
 \end{enumerate}
   \end{proof}
 \begin{proposition}\label{B}
 		Let the system (\ref{ch}) with $ \delta=1+\varepsilon $ and all parameters are standard except $a$. If the conditions of Theorem \ref{int} are satisfied, then it has two independent first integrals of the form $\psi_{1}={(1+\varepsilon)}^{-4}x^{3} y (1+O({(1+\varepsilon)}^{-1}x,{(1+\varepsilon)}^{-1}y,{(1+\varepsilon)}^{-1}z))$ and $\psi_{2}= {(1+\varepsilon)}^{-4}y z^{3} (1+O({(1+\varepsilon)}^{-1}x,{(1+\varepsilon)}^{-1}y,{(1+\varepsilon)}^{-1}z))$,
which are either limited or unlimited.
 \end{proposition}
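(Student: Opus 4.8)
The plan is to repeat the argument of Proposition~\ref{A} with one extra bookkeeping layer that accounts for the nonstandard parameter $a$. First I would apply the scaling $x_1=(1+\varepsilon)^{-1}x$, $y_1=(1+\varepsilon)^{-1}y$, $z_1=(1+\varepsilon)^{-1}z$ and $\tau=(1+\varepsilon)t$ of \eqref{38}, which turns system \eqref{39} with $\delta=1+\varepsilon$ into the normalized Lotka--Volterra system
\begin{equation*}
x_1'=x_1(1+ax_1+by_1+cz_1),\quad y_1'=y_1(-3+dx_1+ey_1+fz_1),\quad z_1'=z_1(1+gx_1+hy_1+kz_1).
\end{equation*}
Since the hypotheses of Theorem~\ref{int} are in force, this system carries two independent first integrals $\phi_1=x_1^{3}y_1\,p_1$ and $\phi_2=y_1z_1^{3}\,p_2$ with $p_i=1+O(x_1,y_1,z_1)$, and pulling them back through the scaling yields
\begin{equation*}
\psi_1=(1+\varepsilon)^{-4}x^{3}y\,p_1,\qquad \psi_2=(1+\varepsilon)^{-4}yz^{3}\,p_2,
\end{equation*}
where $p_i$ is evaluated at $\big((1+\varepsilon)^{-1}x,(1+\varepsilon)^{-1}y,(1+\varepsilon)^{-1}z\big)$, which is exactly the stated form. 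Once $\psi_1,\psi_2$ are exhibited as genuine hyperreals, being ``limited or unlimited'' is automatic, so the real content is the existence and explicit shape of these integrals together with the finer localization ($\mbox{m}(0)$, $\mbox{gal}(0)$ or $\mbox{m}(\omega)$) that I would record exactly as in Proposition~\ref{A}.

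The only genuinely new point is to control how $p_1$ and $p_2$ depend on $a$. Running through the construction of the first integrals in the proof of Theorem~\ref{int} case by case, each $p_i$ is either a convergent power series whose coefficients satisfy recursions that are polynomial in the parameters, or a Darboux expression (a product of invariant algebraic surfaces $l$ and exponential factors) with standard exponents; in either case the Taylor coefficients of $p_i$ in $(x_1,y_1,z_1)$ are \emph{polynomials in $a$ with standard coefficients}, so the nonstandard parameter enters only through monomials of the form $a^{m}x_1^{m}\cdot(\text{standard power series})$. I would then split on the size of $a$. If $a$ is limited, then ${}^{st}a$ exists, the parameter vector is near standard, and the estimates behind Propositions~\ref{ss} and~\ref{A} transfer without change: $\psi_1,\psi_2$ take limited values when $x,y,z\in\mbox{gal}(0)$ and lie in $\mbox{m}(0)$, $\mbox{gal}(0)$ or $\mbox{m}(\omega)$ according to the galaxy containing the variables.

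If $a$ is unlimited, the idea is to keep the products $ax_1$ limited by restricting $x$ to the corresponding micromonad, or to the $\tfrac1a$-galaxy of $0$, while leaving $y$ and $z$ in $\mbox{gal}(0)$; then each series defining $p_i$ is evaluated inside its domain of convergence, $p_i$ stays limited, and $\psi_i$ is limited or unlimited according to the magnitude of $x^{3}y$, respectively $yz^{3}$. The delicate step --- the one I expect to demand real work rather than routine manipulation --- is exactly this uniform convergence control in the unlimited-$a$ regime: one must inspect the explicit Darboux factors and power series produced in the proof of Theorem~\ref{int} and check that, across all the integrability conditions listed there, the region ``$ax_1$ limited and $y_1,z_1$ limited'' sits inside each relevant domain of convergence. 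Granting this, $\psi_1$ and $\psi_2$ are well-defined hyperreals in every case, hence limited or unlimited, which is the assertion of the proposition.
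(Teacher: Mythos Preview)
Your reduction to the normalized system via the scaling \eqref{38} and the pull-back yielding $\psi_1=(1+\varepsilon)^{-4}x^{3}y\,p_1$, $\psi_2=(1+\varepsilon)^{-4}yz^{3}\,p_2$ is exactly what the paper does, and you are right that once these are exhibited as hyperreals the dichotomy ``limited or unlimited'' is tautological.

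Where you diverge is in the finer classification. The paper does \emph{not} attempt a uniform convergence argument in the unlimited-$a$ regime. Instead it introduces the set $G_1$ (resp.\ $G_2$) of coefficients actually appearing in the correction term $O\big((1+\varepsilon)^{-1}x,(1+\varepsilon)^{-1}y,(1+\varepsilon)^{-1}z\big)$ of $\psi_1$ (resp.\ $\psi_2$) and splits directly on three possibilities: $a^{-1}\in G_i$, $a\in G_i$, or $a\notin G_i$ (including $G_i=\{0\}$). In each branch the magnitude of $p_i$ is read off immediately --- for instance, if $a$ is unlimited but only $a^{-1}$ occurs, then $p_i$ stays appreciable; if $a$ itself occurs, $p_i$ is unlimited --- and the size of $\psi_i$ then follows from the elementary algebra of limited/unlimited products with $x^3y$ or $yz^3$. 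Your route via controlling the convergence radius so that ``$ax_1$ limited'' keeps $p_i$ limited is more principled and would give a cleaner uniform statement, but it is also the step you explicitly leave unverified; the paper sidesteps this entirely by the cruder, purely algebraic trichotomy on $G_i$. Either approach suffices for the proposition as literally stated; the paper's buys simplicity at the cost of generality, yours the reverse.
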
	
 \begin{proof}
From the proof of the first part of Proposition \ref{A} with
  $ \delta=1+\varepsilon $, the system has two independent first integrals of the form  $\phi_{1}=x_{1}^{3} y_{1} (1+O(x_{1},y_{1},z_{1}))$ and $\phi_{2}= y_{1} z_{1}^{3} (1+O(x_{1},y_{1},z_{1}))$.
 Then the two first integrals of system(\ref{ch}) are
 $$\psi_{1}={(1+\varepsilon)}^{-4}x^{3} y (1+O({(1+\varepsilon)}^{-1}x,{(1+\varepsilon)}^{-1}y,{(1+\varepsilon)}^{-1}z))$$
 and
 $$\psi_{2}= {(1+\varepsilon)}^{-4}y z^{3} (1+O({(1+\varepsilon)}^{-1}x,{(1+\varepsilon)}^{-1}y,{(1+\varepsilon)}^{-1}z)).$$
The condition will be on the parameter $a$  and other parameters are standard. Let $G_{1}(G_{2}) $ be the set of coefficients $x,y,z$ in $O({(1+\varepsilon)}^{-1}x,{(1+\varepsilon)}^{-1}y,{(1+\varepsilon)}^{-1}z)$. We have different types of first integrals which are 
\begin{enumerate}[I.]
\item The two first integrals $\psi_{1},\psi_{2} $ are appreciable. \\
	\textbf{Case 1:} If $x(z)$, $y$ and $a$ are appreciable, then we have two appreciable first integrals directly from the substitution of variables in the first integrals. \\
	\textbf{Case 2:} If  $x(z)$ and $y$ are appreciable, $a$ is unlimited and $a^{-1} \in G_{1}(G_{2}) $, then the term $(1+O({(1+\varepsilon)}^{-1}x,{(1+\varepsilon)}^{-1}y,{(1+\varepsilon)}^{-1}z))$ stays appreciable and then is multiplied by $x^{3} y$. Therefore, the result is unchanged. \\
	\textbf{Case 3:} If  $x(z)$ and $y $ are appreciable when  $a \not\in G_{1}(G_{2}) $, the result stays appreciable. \\
	\textbf{Case 4:} If  $x(z)$ and $y$ are limited with  $ G_{1}(G_{2})=$\{0\}$ $, then obviously the two first integrals are appreciable since variables are appreciable. \\
	
\item The two first integrals $\psi_{1},\psi_{2} $ are infinitesimal. \\
\textbf{Case 1:} If $x(z)$ and $y $ are infinitesimal, then the term $O({(1+\varepsilon)}^{-1}x,{(1+\varepsilon)}^{-1}y,{(1+\varepsilon)}^{-1}z)$ is
infinitesimal.  Therefore, $(1+O({(1+\varepsilon)}^{-1}x,{(1+\varepsilon)}^{-1}y,{(1+\varepsilon)}^{-1}z))$ is appreciable. When we multiply this appreciable 
term to  ${x}^3 ({z}^3)y $ which is infinitesimal, then the result is infinitesimal.\\
\textbf{Case 2:} If  $x(z)$ and $y $ are infinitesimal with  $a \not\in G_{1}(G_{2})$, and since the
coefficients of $O({(1+\varepsilon)}^{-1}x,{(1+\varepsilon)}^{-1}y,{(1+\varepsilon)}^{-1}z)$ does not contain $a$, then as the Case 1 (II) above it has the
same illustration. \\
\textbf{Case 3:} Let  $x(z)$ and $y $ are infinitesimal or   $x(z) $ is appreciable
(infinitesimal) and $y$  is infinitesimal (appreciable) with $ G_{1}(G_{2})=$\{0\}$ $. Since the
coefficients of $O({(1+\varepsilon)}^{-1}x,\\{(1+\varepsilon)}^{-1}y,{(1+\varepsilon)}^{-1}z))$ are zero, then the two first integrals are
$\psi_{1}={(1+\varepsilon)}^{-4}{x}^3 y $ and $\psi_{2}= {(1+\varepsilon)}^{-4}y{z}^3 $. When the variables are
infinitesimal, it is also true for $\psi_{1}$ and $\psi_{2}$ or when $x(z)$ is appreciable and
$y$ is infinitesimal, the result stays infinitesimal. \\
\textbf{Case 4:} Let  $x(z)$ and $y $ are infinitesimal or   $x(z) $ is infinitesimal and $y$  is appreciable with $ G_{1}(G_{2})=$\{0\}$ $. Since the coefficients of   $O({(1+\varepsilon)}^{-1}x,{(1+\varepsilon)}^{-1}y,{(1+\varepsilon)}^{-1}z))$ are zero, then the two first integrals are
$\psi_{1}={(1+\varepsilon)}^{-4}{x}^3 y $ and $\psi_{2}= {(1+\varepsilon)}^{-4}y{z}^3 $. When the variables are
infinitesimal, it is also true for $\psi_{1}$ and $\psi_{2}$ or when $x(z)$ is infinitesimal and
$y$ is appreciable , the result stays infinitesimal. \\

\item The two first integrals $\psi_{1},\psi_{2} $ are unlimited. \\

\textbf{Case 1:} If $x(z)$, $y$ and $a$ are unlimited, then it clearly obtains two unlimited first integrals. \\
\textbf{Case 2:} If  $x(z),y $ are appreciable, $a$ is unlimited and $a \in G_{1}(G_{2}) $, then  $O({(1+\varepsilon)}^{-1}x,\\{(1+\varepsilon)}^{-1}y,{(1+\varepsilon)}^{-1}z))$ is unlimited and when we multiply it by any appreciable the result is unlimited. \\
\textbf{Case 3:} If $ y $ is unlimited with $ G_{1}(G_{2})=\{0\}$, then the two first integrals become  $\psi_{1}={(1+\varepsilon)}^{-4}{x}^3 y $ and  $\psi_{2}={(1+\varepsilon)}^{-4}{ yz}^3 $, with  $x(z)=\frac{\alpha}{\sqrt[6]{y}}$ $(\frac{\beta}{\sqrt[6]{y}})$ where  $\alpha, \beta \in$ gal(0), and hence the two first integrals stay unlimited.
\end{enumerate}
	\end{proof}
 \begin{proposition}\label{C}
Consider the system (\ref{ch})  with  $ \delta  $ is infinitesimal. If the conditions of Theorem \ref{int} are satisfied, then the system (\ref{ch}) has two independent first integrals of the form $\psi_{1}= {\delta}^{-4} x^{3} y (1+O({\delta}^{-1}x,{\delta}^{-1}y,{\delta}^{-1}z))$ and $\psi_{2}= {\delta}^{-4} y z^{3} (1+O({\delta}^{-1}x,{\delta}^{-1}y,{\delta}^{-1}z))$ which they are either limited or unlimited.
\end{proposition}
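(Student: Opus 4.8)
The plan is to follow the pattern of Proposition~\ref{A}, the only genuinely new feature being that now $\delta^{-1}$ is \emph{unlimited} rather than appreciable. First I would apply the scaling $x_{1}=\delta^{-1}x$, $y_{1}=\delta^{-1}y$, $z_{1}=\delta^{-1}z$ together with the time change $\tau=\delta t$ of \eqref{38}. Since $\delta$ is a \emph{nonzero} infinitesimal, $\delta^{-1}$ is a perfectly well-defined (unlimited) hyperreal, so this is an invertible change of coordinates, and the rescaled system is exactly the standard-form system \eqref{38}, i.e.\ system \eqref{2} with hyperreal but \emph{standard} parameters $a,\dots,k$.

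Next, since those parameters satisfy the conditions of Theorem~\ref{int}, the system \eqref{38} possesses two functionally independent first integrals $\phi_{1}=x_{1}^{3}y_{1}(1+O(x_{1},y_{1},z_{1}))$ and $\phi_{2}=y_{1}z_{1}^{3}(1+O(x_{1},y_{1},z_{1}))$; the objects used in the proof of Theorem~\ref{int} (invariant algebraic surfaces, Darboux and exponential factors, inverse Jacobi multipliers, and the Poincar\'e-domain arguments of Cases 17, 31 and 33) are algebraic identities in the coefficients and hence remain valid verbatim over the hyperreals. Because a first integral is preserved under a diffeomorphic change of variables and under a rescaling of time, pulling $\phi_{1},\phi_{2}$ back through $x=\delta x_{1}$, $y=\delta y_{1}$, $z=\delta z_{1}$ and using $x_{1}^{3}y_{1}=\delta^{-4}x^{3}y$, $y_{1}z_{1}^{3}=\delta^{-4}yz^{3}$ gives
\[
\psi_{1}=\delta^{-4}x^{3}y\bigl(1+O(\delta^{-1}x,\delta^{-1}y,\delta^{-1}z)\bigr),\qquad
\psi_{2}=\delta^{-4}yz^{3}\bigl(1+O(\delta^{-1}x,\delta^{-1}y,\delta^{-1}z)\bigr),
\]
which is the asserted form.

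It then remains to classify the nonstandard nature of $\psi_{1},\psi_{2}$, and here the situation differs from Proposition~\ref{A}: the factor $\delta^{-4}$ is now unlimited, so the magnitude of each integral is decided by a competition between $\delta^{-4}$ and the monomials $x^{3}y$, $yz^{3}$ (the correction $1+O(\cdot)$ being appreciable whenever its arguments are limited and the series converges). I would run the same type of case split as in Propositions~\ref{A} and \ref{B}: if $x,y,z\in\mbox{gal}(0)$ with $x^{3}y$ appreciable then $\psi_{1}$ is unlimited; if instead $x,y,z$ lie in a sufficiently high-order $\delta$-micromonad so that $\delta^{-4}x^{3}y$ is infinitesimal then $\psi_{1}\in\mbox{m}(0)\subset\mbox{gal}(0)$; and by choosing intermediate scalings of the variables against $\delta$ (as with the substitutions $x=\alpha/\sqrt[3]{y_{1}}$ type used in Proposition~\ref{A}) one can also land $\psi_{1}$ on an appreciable value. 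In every case the outcome is a limited or an unlimited hyperreal, and symmetrically for $\psi_{2}$, which is the claim.

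The main obstacle I expect is not the reduction but this classification step, and within it the bookkeeping needed to keep the $O(\cdot)$-series meaningful: because we pull back by the \emph{unlimited} factor $\delta^{-1}$, one must restrict $(x,y,z)$ so that $(\delta^{-1}x,\delta^{-1}y,\delta^{-1}z)$ stays inside the domain of convergence of the series supplied by Theorem~\ref{int} — exactly as was done implicitly in Proposition~\ref{A}. A secondary point, flagged in the Remark following these propositions, is that Proposition~\ref{sari} is \emph{not} available here: the scaled and original vector fields differ by an appreciable amount, so this is a genuine ``unperturbation'' argument, and the reduction to \eqref{38} must be justified directly as an exact change of variables rather than as a perturbation.
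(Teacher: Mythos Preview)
Your proposal is correct and follows essentially the same approach as the paper: reduce via the scaling \eqref{38} to the standard-form system, invoke Theorem~\ref{int} for the first integrals in $(x_1,y_1,z_1)$, pull back to obtain the stated $\psi_1,\psi_2$, and then classify their magnitude by case analysis on the size of the variables relative to $\delta$. The only cosmetic difference is that the paper organizes the classification more compactly by writing $x=\delta^{n}(\alpha_{1}+\delta)$, $y=\delta^{n}(\alpha_{2}+\delta)$, $z=\delta^{n}(\alpha_{3}+\delta)$ with $n$ a standard non-negative integer, so that the three cases $n=0$, $n=1$, $n>1$ yield exactly your ``unlimited'', ``appreciable'', and ``$\delta$-micromonad'' outcomes.
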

\begin{proof}
  	In a similar way as in the proof of Proposition \ref{A}, we have the two independent first integrals of the form $\psi_{1}= {\delta}^{-4} x^{3} y (1+O({\delta}^{-1}x,{\delta}^{-1}y,{\delta}^{-1}z))$ and $\psi_{2}= {\delta}^{-4} y z^{3} (1+O({\delta}^{-1}x,{\delta}^{-1}y,{\delta}^{-1}z))$ .  \\
 Now, if we suppose 	$ x=\delta^{n}(\alpha_{1}+\delta) $, $ y=\delta^{n}(\alpha_{2}+\delta) $ and  $ z=\delta^{n}(\alpha_{3}+\delta) $ where $\alpha_{1},\alpha_{2},\alpha_{3}$ are standard reals and $n$ is standard non-negative integer,
   then we have the following cases: \\
	\textbf{Case 1:} if $n=0$ implies $x,y,z$ $\in$ gal(0), then the two first integrals $\psi_{1},\psi_{2}$ are unlimited\\
	\textbf{Case 2:} If $n=1$ implies $x,y,z$ $\in$ m(0), then the two first integrals $\psi_{1},\psi_{2} $ are appreciable.\\
\textbf{Case 3:} If $1<n$ implies $x,y,z$  $\in$  $\delta$-micromonad(0), then the two first integrals $\psi_{1},\psi_{2}$  belong to $\delta$-micromonad(0). \\
\end{proof}
\begin{proposition}\label{D}
	Suppose the system (\ref{ch}) with $ \delta $ is infinitesimal and all parameters are standard reals except $a$. If the conditions of Theorem \ref{int} are satisfied then it has two independent first integrals of the form $\psi_{1}={\delta}^{-4} x^{3} y (1+O({\delta}^{-1}x,{\delta}^{-1}y,{\delta}^{-1}z))={\delta}^{-4} x^{3} y p_{1}$ and $\psi_{2}={\delta}^{-4} y z^{3} (1+O({\delta}^{-1}x,{\delta}^{-1}y,{\delta}^{-1}z))={\delta}^{-4} y z^{3} p_{2}$ and the two first integrals are either limited or unlimited.
\end{proposition}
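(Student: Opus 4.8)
The plan is to reduce to the standard situation by the rescaling used throughout this section and then run the magnitude bookkeeping of Proposition~\ref{B}, now with $\delta$ infinitesimal. First I would apply the change of variables of \eqref{38}, namely $x_{1}=\delta^{-1}x$, $y_{1}=\delta^{-1}y$, $z_{1}=\delta^{-1}z$ and $\tau=\delta t$; since $\delta$ is infinitesimal it is in particular nonzero, so this is a legitimate nonstandard substitution and it brings \eqref{ch} to the fixed Lotka--Volterra form \eqref{38}, whose parameters are $a,b,\dots ,k$. Because the conditions of Theorem~\ref{int} are assumed, the sufficiency part of that theorem produces two functionally independent first integrals of \eqref{38} of the shape $\phi_{1}=x_{1}^{3}y_{1}(1+O(x_{1},y_{1},z_{1}))=x_{1}^{3}y_{1}p_{1}$ and $\phi_{2}=y_{1}z_{1}^{3}(1+O(x_{1},y_{1},z_{1}))=y_{1}z_{1}^{3}p_{2}$, with $p_{1},p_{2}$ convergent power series (this is exactly the convergence established case by case in the proof of Theorem~\ref{int}). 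Pulling back through the rescaling multiplies each integral by $\delta^{-4}$ and replaces $x_{1},y_{1},z_{1}$ by $\delta^{-1}x,\delta^{-1}y,\delta^{-1}z$, which is the announced form of $\psi_{1},\psi_{2}$.

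The substantive part is the classification. Following Proposition~\ref{C} I would write $x=\delta^{n}(\alpha_{1}+\delta)$, $y=\delta^{n}(\alpha_{2}+\delta)$, $z=\delta^{n}(\alpha_{3}+\delta)$ with the $\alpha_{i}$ standard and $n$ a standard non-negative integer, and, following Proposition~\ref{B}, I would let $G_{1}$ (resp.\ $G_{2}$) be the set of monomial coefficients occurring in $O(\delta^{-1}x,\delta^{-1}y,\delta^{-1}z)$ in $p_{1}$ (resp.\ $p_{2}$) and record whether the free parameter $a$ -- which may be limited or unlimited -- enters them, and with which sign of exponent. Then I would split on $n$: for $n=0$ one has $x,y,z\in\mbox{gal}(0)$; for $n=1$ one has $x,y,z\in\mbox{m}(0)$ and the scaled arguments $\delta^{-1}x,\delta^{-1}y,\delta^{-1}z$ are appreciable; for $n>1$ one has $x,y,z\in\delta$-micromonad$(0)$ and so do $\delta^{-1}x,\delta^{-1}y,\delta^{-1}z$. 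Within each of these, a further split -- $a$ limited versus unlimited, and $a\notin G_{i}$ or $a^{-1}\in G_{i}$ versus $a\in G_{i}$ -- determines whether the factor $p_{i}$ stays appreciable, so that the order of $\psi_{i}$ is governed by the leading monomial $\delta^{-4}x^{3}y$ (resp.\ $\delta^{-4}yz^{3}$), or is itself unlimited; in the first situation $\psi_{i}$ is appreciable, infinitesimal, or lies in $\delta$-micromonad$(0)$ according to $n$, and in the second $\psi_{i}$ is unlimited or lands in $\mbox{gal}(0)$. In every branch $\psi_{i}$ is limited or unlimited, which is the assertion.

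I expect the main obstacle to be precisely the one already met in Proposition~\ref{B}: controlling the factor $p_{i}=1+O(\delta^{-1}x,\delta^{-1}y,\delta^{-1}z)$. One must be sure that this factor is appreciable whenever $x,y,z$ are near-standard and $a$ is limited, so that the magnitude of $\psi_{i}$ is dictated by its leading term; the delicate point is that a single unlimited occurrence of $a$ in a coefficient of $p_{i}$, not compensated by a high enough power of $\delta^{-1}x,\delta^{-1}y,\delta^{-1}z$, can make $p_{i}$ unlimited and shift $\psi_{i}$ out of the monad it would otherwise sit in. This is pure bookkeeping of which monomials of the Darboux (or normal-form) first integrals supplied by Theorem~\ref{int} carry a factor of $a$ and at what order in the variables; once that table is in hand, the magnitude of each $\psi_{i}$ follows from the elementary product rules for limited, infinitesimal, and unlimited hyperreals, and no new idea beyond Propositions~\ref{B} and~\ref{C} is required.
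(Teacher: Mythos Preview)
Your proposal is correct and follows essentially the same route as the paper: reduce via the rescaling \eqref{38} to the standard form, invoke Theorem~\ref{int} to obtain $\phi_{1},\phi_{2}$, pull back to get the $\delta^{-4}$ factors, and then run the magnitude bookkeeping by parametrizing $x,y,z$ as $\delta^{n}(\alpha_{i}+\delta)$ and splitting on whether $a$ is limited or unlimited and how it enters the coefficient sets $G_{i}$. The only organizational difference is that the paper splits first on the nature of $a$ (limited $\Rightarrow$ defer to Proposition~\ref{C}; unlimited with $a^{-1}\in G_{i}$, $a\in G_{i}$, or $a\notin G_{i}$) and then on $n$, while you split on $n$ first; in the $a\in G_{i}$ unlimited case the paper also fixes $a=\delta^{-1}$ to make the interaction with $\delta$ explicit, a tactical choice you leave implicit but which your ``sign of exponent'' remark anticipates.
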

\begin{proof}
		Using the first part of the proof of Proposition \ref{A}, then the two first integrals exist
		with $a$ either infinitesimal or limited, then the proof is similar to Proposition \ref{C}.
 Now for the remaining cases with unlimited parameter $a$,  assume $G_{1}(G_{2}) $ be the set of
 coefficients $x,y,z$ in $O({\delta}^{-1}x,{\delta}^{-1}y,{\delta}^{-1}z)$, then we have different types of first
 integrals.\\
 Suppose 	$ x=\delta^{n}(\alpha_{1}+\delta) $, $ y=\delta^{n}(\alpha_{2}+\delta) $ and  $ z=\delta^{n}(\alpha_{3}+\delta) $ for $\alpha_{1},\alpha_{2},\alpha_{3}$ are standard reals and $n$ is standard non-negative integer,
 then we have the following cases:\\
		\textbf{Case 1:} If $a$ is unlimited and $a^{-1} \in G_{1}(G_{2})$, then\\
			\begin{enumerate}[i.]
				\item	If $2 \leq n$ implies  $x,y,z \in \delta-\mbox{micromonad(0)}$, then 
 the two first integrals are belonging to  $\delta-\mbox{micromonad(0)}$.\\
	\item If $n=1$ implies that  $x,y,z \in \mbox{m}(0) $ and $p_{1}$ is appreciable, then the two first integrals are appreciable.\\
 \end{enumerate}
	\textbf{Case 2:} If $a$ is unlimited and $a \in G_{1}(G_{2})$.
	 Let $a={\delta}^{-1}$, we have the following cases.
	\begin{enumerate}[i.]
		\item If $1<n$ implies  $x,y,z \in \delta-\mbox{micromonad(0)}$, then the results depend on $p_{1}$ as follows.
		\begin{enumerate}
	        \item[1)]For $n=2$, then $p_{1}$ is appreciable and the two first integrals are infinitesimal.\\
     \item[2)]For $2<n$, then $p_{1}$ $\in$ $\delta-\mbox{micromonad(0)}$ and the two first integrals are belonging to $\delta-\mbox{micromonad(0)}$.\\
    \end{enumerate}
		\item If $n=1$ implies  $x,y,z \in \mbox{m}(0)$ and $p_{1}$ is appreciable, then the two first integrals are appreciable.
			\item If $n=0$ implies  $x,y,z \in \mbox{gal(0)}$ and $p_{1}$ is unlimited, then the two first integrals are unlimited.
		\end{enumerate}
		\textbf{Case 3:} If $a$ is unlimited, $a \not\in  G_{1}(G_{2})$, therefore the result of this case is the same as that of Case 2 in proof of Proposition \ref{C}.

\end{proof}


\begin{thebibliography}{12}
	
	\bibitem{5} W. Aziz, C. Christipher. \emph{
		Local integrability and linearizability of three-dimensional Lotka-Volterra systems,}
	{Journal of Applied Mathematics and computation} {219}, (2012), 4067-4081.
	
	\bibitem{10} W. Aziz. \emph{Integrability and linearizability problems of three dimensional Lotka-Volterra equations of rank-2}, Qualitative Theory of Dynamical systems (2019).
	
	\bibitem{11} W.Aziz. \emph{Analytic and algebraic aspects of integrability for first order partial differentail equations}, Thesis (2012).
	
	\bibitem{WAC}
	W. Aziz, A. Amen and C. Pantazi. \emph{Integrability and linearizability of a family of
			three-dimensional quadratic systems}.
		\textit{Dynamical Systems} \textbf{36}, Number 2, (2021), 317--331.
	
	\bibitem{W}
	W. Aziz. \emph{Integrability and Linearizability of Three Dimensional Vector Fields}.
		\textit{Qualitative Theory of Dynamical Systems} \textbf{12}, Number 2, (2014), 197--213. 
	
		
	\bibitem{poincare} S. Chow, C. Li and D. Wang. \emph{Normal forms and bifurcation of planar vector fields}, Cambridge University Press, 1994.
	
	
	\bibitem{2} C. Christopher, P. Marde{\v{s}}i{\'c}  and C. Rousseau. \emph{Normalizable, Integrable, and Linearizable Saddle Points for Complex Quadratic Systems in $\mathbb 
		{C}^2$}. Journal of Dynamical and Control Systems 9, no. 3 (2003): 311-363.
	
\bibitem{16}
 C. Christopher and C. Rousseau. \emph{  Normalizable, integrable and linearizable saddle points in the Lotka-Volterra system}.Qualitative Theory of Dynamical System 5, (2004) 11--61.

\bibitem{lib}
 W. Decker, S. Laplagne, G. Pfister, H. A. Schonemann. \emph{SINGULAR 3-1 library for computing the prime decomposition and radical 
	of ideals}, primdec. lib, 2010.

\bibitem{DGPS}
W. Decker, G. Greuel, G. Pfister and H. Sch{\"o}nemann. \emph{ 
	\newblock {\sc Singular} {4-2-1} --- {A} computer algebra system for polynomial computations}.
\newblock {http://www.singular.uni-kl.de} (2021).
	
\bibitem{4} 
 F. Diener, M. Diener. \emph{Nonstandard Analysis in practice}, Springer-Verlag,Berlin,Heildberg, 1996.	


\bibitem{20}
 F. Diener and G. Reeb. \emph{Analys non standard}, Hermann, 1989.	

\bibitem{13}
 M. Dukari\'c, R. Oliveira, V. Romanovski. \emph{ Local integrability and linearizability of a $(1 : - 1 : - 1)$ resonant quadratic system}. J. Dyn. Differ. Equ. 29, (2017) 597--613.

\bibitem{F. Dumortier2006}
{F. Dumortier, J. Llibre,  and J. Art\'{e}s. \emph{Qualitative Theory of Planar Differential Systems}, Springer-Verlag Berlin Heidelberg, 1977.}
{2256001}{}


\bibitem{8} P. Gao and Z.Liu. \emph{An indirect method of finding integrals for three-dimensional homogeneous systems,}
Phys.Letters A 244(1998) 49--52.

\bibitem{9} P. Gao. \emph{Direct integration method and first integrals for three-dimensional Lotka-Volterra systems}, Phys.Letters A 255(1999) 253--258.

\bibitem{21} I. Hamad and M. Pirdawood. \emph{Nonstandard Solutions for Ordinary Differential Equations near Singularity}, $3^{rd}$ International Conference on Recent Trends in Multi-Disciplinary Research (ICRTMDR-20),  2020., (Maldives ).

\bibitem{22} I. Hamad. \emph{On Some Nonstandard Development of  Intermediate Value Property}, Journal of Mathematics and Computer Science, Vol 45(2), (2016) 35--45 (Egypt).

\bibitem{23} I. Hamad. \emph{Generalized Curvature and Torsion in Nonstandard Analysis},thesis, LAP Lambert Academic Publishing , ISBN 978-3-8443-0763-4.

\bibitem{Ilyashenko2008}
Y. Ilyashenko, and S. Yakovenko. \emph{Lectures on Analytic Differential Equations},
	Graduate Studies in Mathematics Vol.~86. American Mathematical Society, Providence, RI, 2008.

\bibitem{Joshi} K. Joshi. \emph{Introduction to general topology}. New Age International, 1983.

\bibitem{6} C. Laurant, J. Llibre. \emph{Darboux integrability for 3D Lotka-Volterra systems,}
J. Phys. A: Math. Gen. 33 (2000) 2395--2406.

\bibitem{7}  C. Laurant. \emph{Darboux first integral conditions and integrability of the 3D Lotka-Volterra systems,}
{Journal of Nonlinear Math. phys.} {7}(4), (2000), 511--531.

\bibitem{15} C.Liu, G.Chen and G. Chen. \emph{  integrability of  Lotka--Volterra systems of degree 4}. J. Nonlinear Math. Phys. 7, (2000) 511--531.

\bibitem{19} C. Lobry and T. Sari. \emph{The peaking Phenomenon and singular Perturbations:an extension of on Tykhonov's theorem},Rapport de recherche INRIA,vol.4051, (2000) ,32 pages. Math.,vol.942(1982) 113--135.

\bibitem{3} P. Loeb and  M. Wolff. \emph{Nonstandard analysis for the working mathematician}. Dordrecht, Boston: Kluwer Academic Publishers, 2000.

\bibitem{18} R. Lutz and T. Sari. \emph{Applications of Nonstandard analysis in boundary value problems in singular perturbation theorey}.Springer Lecture Notes in Math.,vol.942, (1982), 113--135.

\bibitem{covid1}
W. Mohammed, E. Aly, A. Matouk, S. Albosaily and E. Elabbasy. \emph{An analytical study of the dynamic behavior of Lotka-Volterra based models of COVID-19}.
\textit{Results in Physics } \textbf{26}, (2021), 104432.

\bibitem{covid2}
{S. Pidurkar, B. Thakran, S. Raut, L. Thakre and N. Vaidya}. \emph{Analysis of Covid-19 in India using mathematical modeling}.
	\textit{Journal of Physics: Conference Series} \textbf{1913}, Number 1, (2021), 012110. 

\bibitem{12} C.Pugh. \emph{Real mathematical analysis}, Charles Chapman Pugh, Springer-Verlag,New York 2002.

\bibitem{Romanovski2009}
{V. Romanovski, and D.Shafer}. \emph{The Center and Cyclicity Problems: A Computational Algebra
		Approach},
	Birkh$\ddot{u}$ser, Boston, Inc., Boston, MA, 2009.{2500203}{10.1007/978-0-8176-4727-8}	

\bibitem{1} T. Sari. \emph{Nonstandard perturbation theory of differential equations.} In Invited talk at the International research symposium on nonstandard analysis and its applications, ICMS, Edinburgh, pp. 11-17. 1996.

 \bibitem{Nader}
N.Vakil. \emph{Real analysis through modern infinitesimals}, Cambridge University Press, 2011.

\bibitem{W.Walcher}
S. Walcher. \emph{Symmetries and convergence of normal form transformations}.
	\textit{Monograf\'ias de la Academia de Ciencias de Zaragona\/ (Proceedings of the 6th {C}onference on {C}elestial {M}echanics ({S}panish)} \textbf{
		25} (2004), 251--268.  
	    	  				    							
\bibitem{X.Zhang}
X. Zhang. \emph{Analytic normalization of analytic integrable systems and the embedding flows},
 \textit{J. Differential Equations} \textbf{244} (2008), No. 5, 1080--1092.
 

 	 
 	
\end{thebibliography}
\end{document}